\numberwithin{equation}{section}
\newcommand{\bdm}{\begin{displaymath}}
\newcommand{\edm}{\end{displaymath}}
\newcommand{\bdn}{\begin{eqnarray}}
\newcommand{\edn}{\end{eqnarray}}
\newcommand{\bay}{\begin{array}{c}}
\newcommand{\eay}{\end{array}}
\newcommand{\ben}{\begin{enumerate}}
\newcommand{\een}{\end{enumerate}}
\newcommand{\beq}{\begin{equation}}
\newcommand{\eeq}{\end{equation}}
\newcommand{\beqn}{\begin{eqnarray}}
\newcommand{\eeqn}{\end{eqnarray}}
\newcommand{\bml}[1]{\begin{multline} #1 \end{multline}}
\newcommand{\bmln}[1]{\begin{multline*} #1 \end{multline*}}
\newcommand{\mean}[1]{\lf\langle #1 \ri\rangle}
\renewcommand{\leq}{\leqslant}
\renewcommand{\geq}{\geqslant}
\newcommand{\tx}{\textstyle}
\newcommand{\disp}{\displaystyle}
\newcommand{\lf}{\left}
\newcommand{\ri}{\right}
\newcommand{\braket}[2]{\lf\langle #1|#2 \ri\rangle}
\newcommand{\xv}{\mathbf{x}}
\newcommand{\xvp}{\mathbf{x}^{\prime}}
\newcommand{\rv}{\mathbf{r}}
\newcommand{\yv}{\mathbf{y}}
\newcommand{\kv}{\mathbf{k}}
\newcommand{\nv}{\mathbf{n}}
\newcommand{\qv}{\mathbf{q}}
\newcommand{\pv}{\mathbf{p}}
\newcommand{\fv}{\mathbf{f}}
\newcommand{\alv}{\bm{\alpha}}
\newcommand{\diff}{\mathrm{d}}
\newcommand{\eps}{\varepsilon}
\newcommand{\ba}{\mathcal{B}}
\newcommand{\one}{\mathds{1}}
\newcommand{\form}{\F_{\alv(t)}}
\newcommand{\dom}{\mathscr{D}}
\newcommand{\ham}{H_{\alv(t)}}
\newcommand{\phila}{\phi_{\lambda}}
\newcommand{\philas}{\phi_{\lambda,s}}
\newcommand{\la}{\lambda}
\newcommand{\Gammal}{\Gamma_{\la}}
\newcommand{\R}{\mathbb{R}}
\newcommand{\C}{\mathbb{C}}
\newcommand{\F}{\mathcal{F}}
\newcommand{\LL}{\mathcal{L}}
\newcommand{\I}{\mathcal{I}}
\newcommand{\J}{\mathcal{J}}
\newcommand{\KK}{\mathcal{K}}
\newtheorem{teo}{Theorem}[section]
\newtheorem{lem}{Lemma}[section]
\newtheorem{pro}{Proposition}[section]
\newtheorem{cor}{Corollary}[section]
\theoremstyle{remark}
\newtheorem{rem}{Remark}
\begin{document}

\title{Two-dimensional Time-dependent Point Interactions}

\author[R. Carlone]{Raffaele Carlone}
\address{Universit\`{a} ``Federico II'' di Napoli, Dipartimento di Matematica e Applicazioni ``R. Caccioppoli'', MSA, via Cinthia, I-80126, Napoli, Italy.}
\email{raffaele.carlone@unina.it}

\author[M. Correggi]{Michele CORREGGI}
\address{Dipartimento di Matematica e Fisica, Universit\`{a} degli Studi Roma Tre, L.go San Leonardo Murialdo, 1, 00146, Rome, Italy.}
\email{michele.correggi@gmail.com}

\author[R. Figari]{Rodolfo Figari}
\address{Universit\`{a} ``Federico II'' di Napoli, Dipartimento di Fisica e INFN Sezione di Napoli, MSA, I-80126, Napoli, via Cinthia, Italy.}
\email{rodolfo.figari@na.infn.it}

\dedicatory{dedicated to Pavel Exner}

\date{December 20th, 2015}

\begin{abstract} 
	We study the time-evolution of a quantum particle subjected to time-dependent zero-range forces in two dimensions. After establishing  a conceivable ansatz for the solution to the Schr\"{o}dinger equation, we prove that the wave packet time-evolution is completely specified by the solutions of a system of Volterra-type equations -- the {\it charge equations} -- involving the coefficients of the singular part of the wave function, thus extending to the two-dimensional case known results in one and three dimensions.
\end{abstract}

\maketitle

\tableofcontents

\section{Introduction and Main Results}
Point interactions have been an important theoretical tool to investigate non-trivial qualitative features of the evolution of quantum systems. Since the early days of quantum mechanics they have been extensively used to provide solvable models in various fields of applied quantum physics such as solid state physics of perfect and disordered crystals, spectral nuclear structure, low energy neutron-nuclei scattering and many others. 
%
In the last edition of the reference book in the field \cite{albe}, one can find a detailed and updated reading list on the subject.

The main feature of point interaction hamiltonians is that they are characterized by a minimal set of physical parameters. All the information about the spatial  geometry of the interaction potential acting on the quantum particle is included in the set of positions of the scattering centers, whereas the dynamical parameters consist in a set of real numbers characterizing  boundary conditions that any function in the hamiltonian domain has to satisfy at the scattering centers. Moreover,  in one and three dimensions, it was shown that the behavior at the scattering centers at each time is sufficient to determine uniquely the solution of the Schr\"odinger equation at any time and at any point in space. 

Later, it was recognized that a function of time (referred to as {\it charge} in the following) contains all the information about the behavior of the state function around one interaction point and that the charges are solutions of a system of Volterra integral equations. This extreme simplification of the Cauchy problem was used to generalize the theory to time-dependent and nonlinear point interactions. Such models were employed to investigate ionization issues and problems of quantum evolution in presence of concentrated nonlinearities in one and three dimensions \cite{AT,ADFT,CCLR,CD,CDFM,CLR,CFNT1,CFNT,DFT2}.  

The two dimensional problem turned out to be decisively thornier. As an aside, let us mention that in dimension two the laplacian and the formal delta potential scale in the same way under spacial dilation 
\bdm
	 - \Delta_{a \xv} + \lambda \delta \lf( a \xv \ri) = \frac{1}{a^{2}} [ - \Delta_{\xv} + \lambda \delta \lf( \xv \ri)].
\edm
A relevant consequence of such a scale invariance is that, if $E$ is an eigenvalue of the formal hamiltonian, the same must be true for $a E$, for all $a >0$, making the hamiltonian either trivial or non-self-adjoint. In fact, one of the possible way to define a zero-range potential in dimension two is by ``dimensional regularization", a modern quantum field renormalization scheme, which turned out to be very fruitful as a renormalization tool in Yang-Mills theory (see, e.g., \cite{F} for further details).
The extension of the model to real (or complex)  dimensions $2-\epsilon$ supplies a non-scale invariant theory and a dimensional coupling constant. A (coupling constant dependent) limit $\epsilon \rightarrow 0^+$ then provides the same family of hamiltonians obtained using one of the procedures now available to characterize all the self-adjoint extensions of the laplacian restricted to functions supported outside the set of positions of the interaction centers \cite{albe}. 

From the technical point of view the presence of a logarithmic singularity in the fundamental solution of the Laplace equation  $\displaystyle - \Delta K  + \lambda K =  \delta  $  makes the charge equations in dimension two more difficult to deal with. In particular the techniques of fractional integration and differentiation used in dimension three to regularize the equations are no longer available. This is the main reason why the two-dimensional problem was still open, in spite of the progress in the one- and three-dimensional cases.

In the following we investigate the evolution problem generated by a time dependent point interaction hamiltonian in dimension two. We first review notation, definitions and we state our main results. The last section is devoted to the proofs.

\subsection{The model}

In this paper we want to focus on the study of the time-evolution generated by a time-dependent Schr\"{o}dinger operator with two-dimensional point interactions. More precisely the formal expression we start with is the following
\beq
	\label{eq: formal ham}	
	\tilde{H} = - \Delta + \sum_{j = 1}^N \mu_j(t) \delta(\xv - \yv_j),
\eeq
where $ \xv, \yv_j \in \R^2 $, $ \delta(\xv - \yv_j) $ is the Dirac delta distribution supported at point $ \yv_j $ and $ \mu_j (t) \in \R $ is its strength. The expression above is just formal because in two or more dimensions one can not give a rigorous meaning to the Dirac delta potential, not even in the sense of quadratic forms: the difficulty comes from the fact that $ H^1(\R^2) $, the form domain of $ - \Delta $, contains  functions whose value at a given point $ \yv_j $ might not be defined. To circumvent this problem one can follow different procedures, e.g., introduce a symmetric operator \cite{albe} which coincides with \eqref{eq: formal ham} on a suitable subset of $ H^2(\R^2)$  and study its self-adjoint extensions. Alternatively but equivalently it is possible to introduce (see below) a quadratic form associated with \eqref{eq: formal ham} and study its closedness, i.e., for any 
\beq
	\alpha_j(t) \in C^1(\R)
\eeq
with $ \alv(t) = \lf( \alpha_1(t), \ldots, \alpha_N(t) \ri) $,
\bml{
	\label{eq: form}
	\form[\psi] = \int_{\R^2} \diff \rv \lf\{ \lf| \nabla \phila \ri|^2 + \la \lf| \phila \ri|^2 - \la |\psi|^2 \ri\} + \sum_{j = 1}^N \lf( \alpha_j(t) + \tx\frac{1}{2\pi} \log \frac{\sqrt{\la}}{2} - \frac{\gamma}{2\pi} \ri) |q_j|^2	\\
	+ \frac{1}{2\pi} \sum_{j \neq k} q^*_{j} q_{k} K_{0}(\sqrt{\lambda}\lf| \yv_j - \yv_k \ri|).
}
defined on the domain (notice that the domain $ \dom[\F] $ is actually time-independent)
\beq
	\label{eq: form domain}
	\dom[\F] 
	= \bigg\{ \psi \in L^2(\R^2) \: \Big| \: \psi = \phila + \frac{1}{2\pi} \sum_{j = 1}^N q_j K_0(\sqrt{\la} |\xv - \yv_j|), \phila \in H^1(\R^2), q_j \in \C \bigg\}.
\eeq 
Here $K_0(\sqrt{\la} |\xv|)$ denote the anti-Fourier transform of $ (|\kv|^2 + \lambda)^{-1} $  for any $ \lambda > 0 $. $ K_0(\xv) $ is the modified Bessel function of second kind of order $ 0 $ (also known as Macdonald function, see \cite[Sec. 9.6]{AS}). It belongs to  $L^2 (\R^2)$, it is exponentially decreasing for large $|\xv|$ and its asymptotic behavior for small $|\xv|$ reads \cite[Eq. 9.6.13]{AS}
\beq
	\label{eq: k0 asympt}
	K_0(\sqrt{\la} |\xv|) \underset{|\xv| \to 0}{=} - \log \frac{\sqrt{\la} |\xv|}{2} - \gamma + o(1),
\eeq
with $ \gamma $ the Euler's number. Functions in the domain of $ \form $ are thus composed by a {\it regular part} $ \phi $ and a {\it singular part} containing a local singularity proportional to $ - \log|\xv-\yv_j| $, whose coefficient $ q_j $ is the so-called {\it charge} already mentioned above. 

It is easily checked that $ \dom[\F] $ is independent of $ \la $: a simple way to make this apparent is to observe that for any $ \lambda_1, \lambda_2 > 0 $, $ K_0(\sqrt{\la_1} |\xv|) - K_0(\sqrt{\la_2} |\xv|) \in H^1(\R^2) $, as one can easily verify by considering the Fourier transforms. Moreover the quadratic form \eqref{eq: form} defined on \eqref{eq: form domain} is closed and bounded from below as a consequence of the completeness of $ H^1(\R^2) $ and $ \mathbb{C}^N $. In a much more general setting the proof can be found in \cite{DFT}. 

Therefore it defines for any $ t \in \R $ a unique self-adjoint operator $ \ham $, whose domain is 
\bml{
	\label{eq: op domain}
	\dom(\ham) = \bigg\{ \psi \in L^2(\R^2) \: \Big| \: \psi = \phila + \frac{1}{2\pi} \sum_{j=1}^N q_j(t) K_0(\sqrt{\la} |\xv - \yv_j|), \phila \in H^2(\R^2), 	\\
	\disp\lim_{\xv \to \yv_j} \phila(\xv) =  \lf( \Gammal \qv \ri)_j(t) \bigg\},
}
with
\beq
	\label{eq: op gammal}
	\lf( \Gammal \ri)_{jk} =
	\begin{cases}
		\alpha_j(t) + \tx\frac{1}{2\pi} \log \frac{\sqrt{\la}}{2} + \frac{\gamma}{2\pi},	&	\mbox{if } j = k,	\\
		 -\frac{1}{2\pi} K_0(\sqrt{\la} |\yv_j - \yv_k|),	&	\mbox{if } j \neq k.
	\end{cases}
\eeq
In the simplest case of a single point interaction at the origin the boundary condition thus reads
\bdm
	\lim_{\xv \to 0} \phila(\xv) = \lf( \alpha(t) + \tx\frac{1}{2\pi} \log \frac{\sqrt{\la}}{2} + \frac{\gamma}{2\pi} \ri) q(t).
\edm	
The action of $ \ham $ on functions of $ \dom(\ham) $ characterized in the form \eqref{eq: op domain} is
\beq
	\label{eq: op action}
	\lf(\ham +\la \ri) \psi = \lf( - \Delta + \la\ri) \phila
\eeq
and all the information on the interaction is encoded in the boundary conditions. Unlike the case of the quadratic form, the operator domain does depend on time: a generic $ \psi \in \dom(\ham) $ depends on $ t \in \R $ through the regular part $  \phi $ and the charge $ \qv(t) $. The closedness of the form clearly implies the self-adjointness of $ \ham $, provided $ \ham $ is indeed the operator associated with $ \form $, as we are going to show next. A very crucial property of functions in $ \dom(\ham) $ is that in a neighborhood of any point $ \yv_j $ the following asymptotic behavior holds true
\beq
	\label{eq: boundary condition}
	\psi(|\xv|) \underset{|\xv - \yv_j| \to 0}{=} \tx\frac{1}{2\pi} q_j \log\tx\frac{1}{|\xv- \yv_j|} + \alpha_j(t) q_j + o(1),
\eeq
which is indeed the typical way point interactions are defined in the physics literature (see, e.g., \cite{CFT} and references therein).

Notice that, unlike the three-dimensional case, the expression of the form or operator domain for $ \lambda = 0 $ can in principle be obtained by taking the limit $ \lambda \to 0  $ of \eqref{eq: form domain} or \eqref{eq: op domain}, but, due to the singular large-$ |\xv| $ behavior of the Green function at $ \lambda = 0 $, i.e.,  $ \log|\xv| $, such a procedure does not define a well-posed domain decomposition.

For convenience of the reader we recall here how one can heuristically derive the expression of the quadratic form $ \form $ from the formal expression \eqref{eq: formal ham} via a sort of renormalization: pick any function $ \psi $ satisfying the required singular behavior \eqref{eq: boundary condition} at any point $ \yv_j $, then it can be decomposed as $ \psi = \phi_j + \frac{1}{2\pi} q_j \log \frac{1}{|\xv-\yv_i|} $, where $ \phi_j $ remains bounded as $ \xv \to \yv_j $ and 
\beq
	\label{eq: decomposition}
	\phi_{\la}(\xv) \underset{|\xv - \yv_j| \to 0}{=}  \phi_j(\yv_j) + \tx\frac{1}{2\pi} \log \frac{\sqrt{\lambda}}{2} + \tx\frac{1}{2\pi}\gamma - \disp\frac{1}{2\pi}\sum_{k \neq j}  q_k K_0(\sqrt{\la}|\yv_j - \yv_k|) + o(1),	
\eeq
\beq
		\phi_j(\yv_j) = \alpha_j(t) q_j.
\eeq
Now introducing an ultraviolet cut-off $ \eps $ and observing that 
\bdm
	(-\Delta + \la) \psi = (- \Delta  + \la) \phi_{\la},
\edm
if $ |\xv - \yv_j| \geq \eps $ for any $ j = 1, \ldots, N $, one has
\bmln{
	\form[\psi] + \la \lf\| \psi \ri\|_2^2 = \lim_{\eps \to 0} \int_{\cup_k \{ |\xv - \yv_k| \geq \eps \}} \diff \xv \: \psi^*(\xv) \lf[\lf(\tilde H + \la \ri)\psi \ri](\xv) \\
	= \lim_{\eps \to 0} \int_{\cup_k \{ |\xv - \yv_k| \geq \eps \}} \diff \xv \:  \bigg[ \phi_{\la}^*(\xv) + \frac{1}{2\pi} \sum_{j=1}^N q_j^* K_0(\sqrt{\la}|\xv - \yv_j|) \bigg] \lf[ \lf(-\Delta + \la \ri) \phi_{\la}\ri](\xv) 	\\
	= \lf\| \nabla \phi_{\la} \ri\|^2_2 + \la \lf\| \phi_{\la} \ri\|_2^2 + \frac{1}{2\pi} \lim_{\eps \to 0} \sum_{j = 1}^N q_j^* \int_{\cup_k \{ |\xv - \yv_k| \geq \eps \}} \diff \rv \:  K_0(\sqrt{\la}|\xv - \yv_j|) \lf[ \lf(-\Delta + \la \ri) \phi_{\la}\ri](\xv). 
}
The last term can be integrated by parts twice as
\bmln{
	\frac{1}{2\pi} \int_{\cup_k \{ |\xv - \yv_k| \geq \eps \}} \diff \rv \:  K_0(\sqrt{\la}|\xv - \yv_j|) \lf[ \lf(-\Delta + \la \ri) \phi_{\la}\ri](\xv) \\
	= - \frac{1}{2\pi} \sum_{k = 1}^N \int_{\partial \ba_{\eps}(\yv_k)} \diff \sigma \: \phi_{\la}(\xv) \: \nv \cdot \nabla K_0(\sqrt{\la}|\xv - \yv_j|) \\
	= - \frac{1}{2\pi} \int_{\partial \ba_{\eps}(\yv_j)} \diff \sigma \: \phi_{\la}(\xv) \: \nv \cdot \nabla K_0(\sqrt{\la}|\xv - \yv_j|) + o(1)\\
	= \alpha_j(t) q_j + \tx\frac{1}{2\pi} \log \frac{\sqrt{\lambda}}{2} + \tx\frac{1}{2\pi}\gamma - \disp\frac{1}{2\pi} \sum_{k \neq j} q_k  K_0(\sqrt{\la}|\yv_j - \yv_k|)  + o(1),	
}
since  the asymptotics \eqref{eq: k0 asympt} implies
\bdm
	\int_{\partial \ba_{\eps}(\yv_j)} \diff \sigma \: \nv \cdot \nabla K_0(\sqrt{\la}|\xv - \yv_j|) = - \int_{\partial \ba_{\eps}(0)} \diff \sigma \: \frac{1}{\eps} = - 2\pi,
\edm
and the expression of the quadratic form is recovered.


It is worth mentioning that in the two-dimensional case point interactions are always attractive, meaning that there always exists  at least one bound state. For a single point interaction at $ \xv = 0 $ with strength $ \alpha $, its wave function is proportional to $ K_0(\sqrt{\la_{\alpha}}|\xv|) $ and its energy $ E_{\alpha} $ is
\bdm
	E_{\alpha} = - \la_{\alpha} : = - 4 e^{2 \gamma - 4\pi \alpha}.
\edm

\subsection{Time-evolution}

Our goal is to examine the properties of the time-dependent hamiltonians we have just defined and check under which conditions they generate a non autonomous quantum dynamics,  meaning that there exists a two parameter group of unitary operators $U(t,s)$ satisfying, in a sense which has to be specified,  the Schr\"odinger equation
\beq
	\label{eq: generator}
	i \partial_t U(t,s) = \ham U(t,s),
\eeq
in such a way that the function
\beq
	\label{eq: time evolution}
	\psi_t(\xv) = U(t,s) \psi_s(\xv).
\eeq
	solves the Cauchy problem: for any $ \psi \in \dom(H_{\alpha(s)}) $,
\beq
	\label{eq: cauchy}
	\begin{cases}
		i \partial_t \psi_t = \ham \psi_t,	\\
		\psi_s = \psi.
	\end{cases}
\eeq

In this paper the focus of our attention will be on the solution of the time-evolution problem described above.  It is worth mentioning that an explicit expression of the integral kernel of the propagator $ e^{-iH_{\alpha} t} $, when $ \alpha $ does not depend on time, is already known \cite{ABD}, but its extension to the time-dependent case is not straightforward.

Our approach is based on a result, earlier proved and exploited in dimension three, stating that the solution of the Schr\"odinger equation is completely specified by the values of the charges $ q_j (t), \, j= 1,\ldots,N\,$, characterizing the behavior of the wave packet around the scattering centers at each time $t$. The time dependent complex charges are solutions of a system of $N$ coupled Volterra integral equations  -- the {\it charge equations} -- thus reducing the complexity of the problem from the analysis of a non autonomous flow in an infinite-dimensional Hilbert space to the search of solutions to a system of Volterra-type equation for $N$ complex valued functions of time (see \eqref{eq: charge eq} below).  For computational purposes as well as for possible extensions to nonlinear models such a complexity reduction is of course crucial.  The  procedure outlined above has been exploited in \cite{SY}, for the three-dimensional analogue of the problem we are facing here, in \cite{CCLR,CLR,CD,CDFM} to investigate model-atoms ionization triggered by time dependent forces in dimension one and three and in \cite{DFT2} for the derivation of the time-dependent propagator in the case of three-dimensional moving point interactions. For a detailed introduction to the problem considered in this paper as well as many preliminary results we also refer to \cite{A}.

Before stating our main result we need to introduce first some notation: $ U_0(t) $ will denote the free propagator, i.e.,
\beq
	\label{eq: free propagator}
	U_0(t) : = e^{ i \Delta t},
\eeq
with integral kernel for $ t \in \R $ and $ \xv \in \R^2 $
\beq
	U_0(t; |\xv|) = \frac{e^{-\frac{|\xv|^{2}}{4 i t}}}{2 i t}.
\eeq 
The Volterra function of order $ - 1 $ \cite{E} is defined as
\beq
	\label{eq: i}
	\I(t) : = \int_{0}^{\infty}\diff \tau \: \frac{t^{\tau - 1}}{\Gamma(\tau)},
\eeq
where $ \Gamma $ denotes the  Gamma function. Some of the crucial properties of $ \I(t) $ are listed in Section \ref{sec: volterra}. Here we just point out that $ \I $ is an analytic function of $ t $ with branch points at $ 0 $ and $ \infty $.

Before stating our main result we introduce the charge equation associated to the time-evolution of the hamiltonian $ \ham $: given any initial datum $ \psi_s \in \dom(H_{\alv(s)}) $,
\beq
	\label{eq: charge eq}
	\framebox{$\qv(t) + \disp\int_s^t \diff \tau \: \KK(t,\tau) \: \qv(\tau) = \fv(t),$}
\eeq
where 
\beq
	\label{eq: KK}
	\KK_{jk}(t,\tau) : = 
	\begin{cases}
		4 \pi \I(t - \tau) \lf( \alpha_j(\tau) - \tx\frac{1}{2\pi} \log 2 + \frac{\gamma}{2\pi} \ri)	&	\mbox{if } j = k,	\\
		- 2i \I(t-\tau) \disp\int_s^\tau \diff \sigma \:  U_0(\tau-\sigma; |\yv_j - \yv_k|)	&	\mbox{if } j \neq k.
	\end{cases}
\eeq
and
\beq
	\label{eq: fv}
	f_j(t) : = 4\pi \disp\int_s^t \diff \tau \: \I(t-\tau) \lf(U_0(\tau) \psi_s\ri)(\yv_j).
\eeq

	\begin{teo}[Time-evolution]
		\label{teo: evolution}
			\mbox{}	\\
		Let $  U(t,s): L^2(\R^2) \to L^2(\R^2) $ be the map 
		\beq
			\label{eq: ansatz}
			\framebox{$\lf( U(t,s) \psi_s \ri) (\xv) = \lf( U_0(t-s) \psi_s \ri) (\xv) + \disp\frac{i}{2\pi} \sum_{j=1}^N \int_s^t \diff \tau \: U_0\lf(t - \tau; |\xv - \yv_j|\ri) \: q_j(\tau),$}
		\eeq
		where $ \qv(t) $ is a solution of the Volterra-type integral equation \eqref{eq: charge eq}. Then 
		\ben[(a)]
			\item  $ U(t,s) $ is a two-parameter unitary group: for any $ v,t,s \in \R $, $ U(t,s) $ is unitary, $  U(t,t) = \one $ and $ U(t,s) U(s,v) = U(t,v) $;
			\item  for any $ t,s \in \R $, $  U(t,s) $ solves the time-dependent Schr\"{o}dinger equation \eqref{eq: generator}, i.e., for any $ \psi_s \in \dom(H_{\alpha(s)}) $, $ \psi_t : = U(t,s) \psi_s \in \dom(\ham) $ and
				\beq
					\label{eq: td se}
					i \partial_t \psi_t = \ham \psi_t.
				\eeq
		\een
	\end{teo}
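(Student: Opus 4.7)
The plan is to verify, in order, that (i) for every fixed $t$ the ansatz \eqref{eq: ansatz} produces a function lying in $\dom(\ham)$, (ii) this function satisfies the boundary conditions \eqref{eq: op domain}--\eqref{eq: op gammal}, which is precisely what forces $\qv(t)$ to solve the charge equation \eqref{eq: charge eq}, (iii) the equation \eqref{eq: td se} holds in the $L^{2}$ sense, and finally (iv) the group and unitarity properties.

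The first and most delicate step is to extract the short-distance asymptotics of the singular contribution
\begin{equation*}
\Phi_{j}(t,\xv) := \frac{i}{2\pi}\int_{s}^{t}\diff\tau\, U_{0}\bigl(t-\tau;|\xv-\yv_{j}|\bigr)\,q_{j}(\tau)
\end{equation*}
as $\xv\to\yv_{j}$. Using the explicit Gaussian kernel and a change of variables $u=|\xv-\yv_{j}|^{2}/(4(t-\tau))$, I expect the standard expansion to produce a logarithmic singularity $\frac{1}{2\pi}q_{j}(t)\log\frac{1}{|\xv-\yv_{j}|}$ plus a bounded remainder expressible through a convolution of $q_{j}$ against the Volterra function $\I$ from \eqref{eq: i}, the latter being exactly the object that inverts the relevant Abel-type operator in two dimensions. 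For $k\neq j$ the contribution $\Phi_{k}(t,\xv)$ is smooth at $\yv_{j}$, and together with $U_{0}(t-s)\psi_{s}$, which is continuous and real-analytic off its singular support, it contributes to the regular part. This identifies the decomposition $\psi_{t}=\phi_{\lambda,t}+\frac{1}{2\pi}\sum_{j}q_{j}(t)K_{0}(\sqrt{\lambda}|\xv-\yv_{j}|)$ required in \eqref{eq: op domain}.

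The second step is to compute $\phi_{\lambda,t}(\yv_{j})$ from this decomposition and impose the boundary relation $\phi_{\lambda,t}(\yv_{j})=(\Gamma_{\lambda}\qv)_{j}(t)$. Collecting the constant term in \eqref{eq: k0 asympt}, the off-diagonal $K_{0}$ contributions from $\Phi_{k}$, $k\neq j$, the free-propagator value $(U_{0}(t-s)\psi_{s})(\yv_{j})$, and the finite part of $\Phi_{j}$ at $\yv_{j}$, an explicit calculation should reproduce identity \eqref{eq: charge eq} with the kernel $\KK$ and source $\fv$ of \eqref{eq: KK}--\eqref{eq: fv}; the factor $4\pi$ and the constants $-\tfrac{1}{2\pi}\log 2+\tfrac{\gamma}{2\pi}$ in $\KK$ arise naturally from combining $\log\frac{\sqrt{\lambda}}{2}+\gamma$ in $\Gamma_{\lambda}$ with the $\sqrt{\lambda}$-dependence of the finite part of $\Phi_{j}$ (both must cancel since the left-hand side is $\lambda$-independent). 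This step is where I expect the bulk of the technical difficulty to sit, since one needs to justify termwise interchanges of limits and integrals near the singular time $\tau=t$ using the regularity assumption $\alpha_{j}\in C^{1}(\R)$ and the regularity of $\qv(t)$ inherited from the Volterra equation.

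For step (iii), differentiating the ansatz in $t$ reduces to showing that $i\partial_{t}\Phi_{j}=(-\Delta)\Phi_{j}$ in the sense required by \eqref{eq: op action}; away from $\yv_{j}$ this is immediate since $U_{0}(t-\tau;|\cdot|)$ satisfies the free Schr\"odinger equation. The boundary term at $\tau=t$ must be handled by regularization on $[s,t-\epsilon]$ and using the $\xv\to\yv_{j}$ asymptotics of the previous step to see that the surviving contribution is exactly what is needed to match $\ham\psi_{t}=(-\Delta+\lambda)\phi_{\lambda,t}-\lambda\psi_{t}$. Once \eqref{eq: td se} is established, the group property $U(t,s)U(s,v)=U(t,v)$ follows from the uniqueness of solutions to the Volterra equation \eqref{eq: charge eq}, while unitarity follows from self-adjointness of $\ham$ via $\frac{\diff}{\diff t}\|\psi_{t}\|_{2}^{2}=2\,\mathrm{Re}\,\langle\psi_{t},-i\ham\psi_{t}\rangle=0$, combined with invertibility of $U(t,s)$ obtained by running the construction backwards in time.
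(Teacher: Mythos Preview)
Your overall architecture matches the paper's: identify the singular part of the ansatz, show the boundary condition at each $\yv_j$ is equivalent to the charge equation, verify the Schr\"odinger equation, and conclude unitarity from self-adjointness plus the group law. Two points, however, diverge from the paper in ways that matter.

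\textbf{How $\I$ enters.} You expect the finite remainder of $\Phi_j$ at $\yv_j$ to be ``a convolution of $q_j$ against the Volterra function $\I$''. It is not. The paper works in Fourier space: after an integration by parts in $\tau$ exposing $\dot q_j$, the relevant $\pv$-integral $\int_{\R^2}\diff\pv\,e^{-ip^2(t-\tau)}/(p^2+\lambda)$ is evaluated via sine and cosine integrals and produces the kernel $\gamma+\log(t-\tau)$, i.e.\ $-\J(t-\tau)$; see \eqref{eq: charge eq heuristics}. The charge equation in the form \eqref{eq: charge eq} is obtained only \emph{after} applying the operator $I$ (convolution with $\I$) to both sides and using the identity $IJf=\int_0^{t}f$ of Lemma~\ref{lem: i identity}. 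So $\I$ is not the remainder kernel but the device that inverts the logarithmic kernel that actually appears. Your change of variables $u=|\xv-\yv_j|^2/4(t-\tau)$ will likewise deliver a $\log(t-\tau)$ structure, not $\I$ directly, and you will still need Lemma~\ref{lem: i identity} to reach \eqref{eq: charge eq}.

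\textbf{Regularity and the route through the form domain.} You invoke ``the regularity of $\qv(t)$ inherited from the Volterra equation'' without saying what it is. The paper proves $\qv\in H^{\nu}(0,T)$ for every $\nu<3/4$ (Proposition~\ref{pro: uniqueness}); this precise threshold is what makes $\dot q_j\in H^{\nu-1}$ with $\nu-1>-1/2$, which is exactly what is needed to show the regular part lies in $L^2(\R^2,(p^2+1)\diff\pv)$, i.e.\ $\phi_{\lambda,t}\in H^1$ (Proposition~\ref{pro: invariance}). The paper does \emph{not} go directly to $\phi_{\lambda,t}\in H^2$ as your step~(i) proposes; pointwise asymptotics near $\yv_j$ do not by themselves yield $H^2$ membership. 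Instead the paper first establishes form-domain invariance, then the weak Schr\"odinger equation (Corollary~\ref{cor: weak se}), and only then upgrades to the operator domain (Lemma~\ref{lem: op domain invariant}) by observing that form-domain membership is precisely what is needed to make the heuristic derivation of Section~\ref{sec: derivation} rigorous. Your steps (iii) and (iv) are essentially the paper's Lemma~\ref{lem: isometry} and the closing argument, and are fine; the gap is in (i)--(ii), where you should insert the $H^{\nu}$ estimate and pass through $\dom[\F]$ before claiming $\dom(\ham)$.
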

	
	\begin{rem}[Uniqueness of $ \qv(t) $]
		\mbox{}	\\
		Although we did not state it explicitly the well-posedness of the time-evolution identified by $ U(t,s) $ requires that the solution to \eqref{eq: charge eq} is unique. This is indeed the case as it is proven in Proposition \ref{pro: uniqueness}.	
	\end{rem}
	
	\begin{rem}[Ansatz \eqref{eq: ansatz}]
		\mbox{}	\\
		The statement of the Theorem \ref{teo: evolution} says that the ansatz \eqref{eq: ansatz} provide the time-evolution of $ \psi_s $ whenever $ \qv(t)$ solves the charge equation \eqref{eq: charge eq}. Such a statement however can be also read in the opposite direction: given $ \psi_t $ any solution to the time-dependent Schr\"{o}dinger equation \eqref{eq: td se}, then it can be rewritten in the form \eqref{eq: ansatz}, with $ \qv(t) $ solving the charge equation \eqref{eq: charge eq}.
	\end{rem}
	
	

\section{Proofs}
\label{sec: proofs}

This Section contains the proofs of the results stated in Theorem \ref{teo: evolution}, which are divided in several steps:
\ben
	\item	first we examine the integral operator defined by the Volterra kernel $ \I $ and prove some of its relevant properties;
	\item  then we focus on the charge equation and prove that there exists a unique solution in the space of continuous functions;
	\item  such information becomes then a crucial ingredient to prove that the form domain $ \dom[\F] $ is invariant under the map $ U(t,s) $;
	\item  next we show that given $ \psi_s \in \dom(H_{\alv(s)}) $ then $ U(t,s) \phi_s \in \dom(\ham) $ and on a dense subset of the Hilbert space $ U(t,s) $ defines an isometry, which coincides with the time-evolution generated by $ \ham $;
	\item finally we show that $ U $ extends to a two-parameter unitary group by density.
\een

\subsection{Properties of the Volterra kernel $ \I $}
\label{sec: volterra}

We start the discussion by recalling some useful properties of the function $ \I[t] $ defined in \eqref{eq: i}. We refer to \cite[Sec. 18.3]{E} (where $ \I(t) $ is denoted as $ \nu(t,-1) $) for further details. One striking relation involving $ \I(t) $ is the inversion formula of the Laplace transform \cite{et},\cite{SKM}: denoting by 
\bdm
	(\LL f)(p) = \int_0^{\infty} \diff t \: e^{-pt} f(t),
\edm 
the usual action of the Laplace transform, then
\begin{equation}\label{eq: invlap}
	\mathcal{L}^{-1}\left(\frac{p}{\log(p)}\right)(t) = \nu \left(t, -1\right) = \I(t).
\end{equation}

Since they will play some role in the following  we also provide the asymptotic expansions of $ \I(t) $ as $ t \to 0 $ or $ t \to \infty $ (see again \cite{E}):
\bdm
	\I(t) \underset{t \to 0}{=} \frac{1}{t \log^2 \left(\frac{1}{t}\right)}\left[1 + \mathcal{O}(\left|\log t \right|^{-1}) \right],
\edm
\bdm
	\I(t) \underset{t \to \infty}{=} e^{t}+\mathcal{O}(t^{-1} ).
\edm
Hence, given the previous expansions,  $ \I(t) \in L^{1}_{\textrm{loc}}(\R) $. 

Next we study the integral operator
\beq
	\label{eq: integral i}
	\lf( I f \ri)(t) : = \int_0^t \diff \tau \: \I(t - \tau) f(\tau).
\eeq
We also denote by $ J $ the integral operator with kernel 
\beq
	\label{eq: integral j}
	\lf( J f \ri)(t) : = \int_0^t \diff \tau \: \J(t - \tau) f(\tau),		\qquad		\mathcal{J}(t-\tau) = - \gamma - \log (t - \tau).
\eeq 
In \cite{CF} the operator $ I $ is investigated in details and several useful properties, such as its smoothing action, are established. Here we only need a notable identity, which is stated in next Lemma \ref{lem: i identity} and a simple estimate of the Sobolev norm of $ I f $ (we refer to \cite{CF} for the proof):

\begin{lem}
		\label{lem: i pro}
		\mbox{}	\\
		If $ f \in H^\nu(0,T) $ with $ 0 < \nu \leq  1 $, then $ I f \in H^\nu(0,T) $, i.e., $ \exists C_t < + \infty $ such that
		\beq
			\label{eq: norm i estimate}
			\lf\| I f \ri\|_{H^{\nu}(0,T)} \leq C_T \lf\| f \ri\|_{H^{\nu}(0,T)}.
		\eeq
		Moreover $ C_T \to 0 $ as $ T \to 0 $.
	\end{lem}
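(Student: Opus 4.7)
I plan to deduce the estimate from the two ingredients just supplied by the paper: the Laplace representation \eqref{eq: invlap} of the kernel $\I$, and its integrable‐yet‐barely‐so singularity at the origin, $\I(t)\sim (t\log^2(1/t))^{-1}$. The argument naturally falls into three steps — an $L^2$ bound, an $H^1$ bound, and an interpolation — with the smallness of the constant $C_T$ traceable to the fact that the singularity of $\I$ is concentrated near $t=0$ and integrates to a quantity vanishing with $T$.

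I would first establish the $L^2$ case. By Young's inequality for convolutions on an interval,
$$\|If\|_{L^2(0,T)}\leq\|\I\|_{L^1(0,T)}\|f\|_{L^2(0,T)}.$$
Substituting $u=\log(1/t)$ and using the small-$t$ asymptotic gives $\|\I\|_{L^1(0,T)}\leq 1/\log(1/T)+O(\log^{-2}(1/T))$, which indeed vanishes as $T\to 0^+$. This produces an acceptable constant for $\nu=0$ and serves as the base point for the interpolation.

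For the $H^1$ estimate, I would pass to the Laplace side: since $(If)(0^+)=0$, one has $\LL[(If)'](p)=p\LL[\I](p)\LL[f](p)$, which by the inversion formula \eqref{eq: invlap} equals $(p/\log p)\LL[f](p)$. Factoring $p\LL[f](p)=\LL[f'](p)+f(0)$ yields, after inversion, the distributional identity $(If)'=I(f')+f(0)\I$. The first term is handled by the $L^2$ step applied to $f'$, again with a constant going to $0$ with $T$. The boundary contribution $f(0)\I$ is the delicate part because $\I\notin L^2(0,T)$; to control it I would, prior to differentiating, rewrite $f(\tau)=f(0)+\int_0^\tau f'(\sigma)\,d\sigma$ inside $If$ and integrate by parts in the convolution variable, turning the singular boundary term into a convolution of $f'$ against a \emph{primitive} of $\I$, which does lie in $L^2(0,T)$ and whose $L^2$ norm also tends to $0$ as $T\to 0$. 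For general $0<\nu<1$, real interpolation between $L^2(0,T)$ and $H^1(0,T)$, using $[L^2(0,T),H^1(0,T)]_{\nu,2}=H^\nu(0,T)$, transfers the small-constant property to $H^\nu$.

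The main obstacle is precisely the $H^1$ step. A naive calculation surfaces a term $f(0)\I(t)$ lying outside $L^2$, which reflects that $\I$ is only just barely integrable near zero. Getting $H^1$ control requires using the Laplace symbol $p/\log p$ in an essential way, exploiting the matching logarithmic smoothing of the convolution to reabsorb the boundary singularity into an $L^1$-type estimate of the primitive of $\I$. This is the point at which the fine analytic structure of the Volterra kernel (not just its pointwise size) plays its role, and also the only place where the shortness of the interval is used to force $C_T\to 0$.
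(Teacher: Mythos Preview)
The paper does not prove this lemma: it simply cites the external reference \cite{CF} (``in preparation'') for the proof. So there is no argument in the paper to compare your proposal against, and I can only assess your outline on its own merits.

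Your $L^2$ step via Young's inequality is correct, and the asymptotics of $\I$ indeed give $\|\I\|_{L^1(0,T)}\to 0$ as $T\to 0$. The interpolation strategy is also the natural one. The genuine gap is in the $H^1$ endpoint. Your Laplace computation $(If)'=I(f')+f(0)\,\I$ is right, and you correctly flag that $\I\notin L^2(0,T)$ (its square behaves like $t^{-2}\log^{-4}(1/t)$ near $0$). But the proposed fix does not remove the obstruction: writing $f(\tau)=f(0)+\int_0^\tau f'$ inside $If$ and exchanging integrals gives
\[
(If)(t)=f(0)\,\nu(t)+\int_0^t \nu(t-\sigma)\,f'(\sigma)\,\diff\sigma,
\]
with $\nu$ the primitive of $\I$; differentiating returns exactly $f(0)\,\I(t)+I(f')(t)$, the same expression. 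No integration by parts in the convolution variable can make the $f(0)\,\I$ term disappear, because it is produced by the value of $f$ at the fixed endpoint $0$, not by the moving boundary. In particular, for $f\equiv 1\in H^1(0,T)$ one has $If=\nu$ and $(If)'=\I\notin L^2(0,T)$, so the $\nu=1$ case as literally stated cannot hold without an extra hypothesis such as $f(0)=0$.

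This means your interpolation cannot use $H^1$ as the upper endpoint. A workable repair is to establish the bound directly for some $\nu_0<1$ (for instance via a fractional--derivative or kernel estimate that exploits the logarithmic smoothing of the symbol $p/\log p$ without passing through a full derivative) and then interpolate between $L^2$ and $H^{\nu_0}$. That would cover the range $0<\nu<1$ actually used later in the paper (only $\nu<3/4$ is needed in the proof of Proposition~\ref{pro: uniqueness}), though not the stated endpoint $\nu=1$.
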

	
	
	\begin{lem}
		\label{lem: i identity}
		\mbox{}	\\
		For any $ t \in \R^+ $ and $ f \in L^1(0,t) $,
		\beq
			\label{eq: i inverse}
			\lf( I J f \ri)(t) = \int_0^t \diff \tau \: f(\tau).
		\eeq
	\end{lem}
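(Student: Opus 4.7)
My strategy is to reduce the statement to the single pointwise identity $(\I \ast \J)(s) = 1$ for all $s > 0$ (with $\ast$ the standard convolution on $\R^+$). Assuming this for the moment, Fubini's theorem together with the change of variables $u = \sigma - \tau$ gives
\[
(IJf)(t) = \int_0^t d\sigma\, \I(t-\sigma) \int_0^\sigma d\tau\, \J(\sigma-\tau) f(\tau) = \int_0^t d\tau\, f(\tau) \, (\I \ast \J)(t-\tau) = \int_0^t f(\tau)\, d\tau,
\]
which is exactly \eqref{eq: i inverse}. The interchange of order is legitimate for $f \in L^1(0,t)$ since $\I$ has only an integrable $[s \log^2(1/s)]^{-1}$ singularity at the origin and $\J$ only a logarithmic one, so $(\sigma, \tau) \mapsto \I(t-\sigma)\J(\sigma-\tau) f(\tau)$ is absolutely integrable on the relevant triangle.

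\textbf{Key step.} To establish $(\I \ast \J)(s) = 1$, I would pass to Laplace transforms. Inserting the integral representation of $\I$ and applying Fubini once more gives, for $\operatorname{Re} p > 1$,
\[
\LL[\I](p) = \int_0^\infty \frac{d\tau}{\Gamma(\tau)} \int_0^\infty e^{-pt} t^{\tau-1}\, dt = \int_0^\infty p^{-\tau}\, d\tau = \frac{1}{\log p}.
\]
The standard identity $\LL[\log t](p) = -(\gamma + \log p)/p$ then yields $\LL[\J](p) = (\log p)/p$. Multiplying, $\LL[\I \ast \J](p) = 1/p = \LL[\one](p)$, and uniqueness of Laplace inversion---both sides being continuous on $(0, \infty)$---forces $(\I \ast \J)(s) = 1$ pointwise for every $s > 0$.

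\textbf{Main obstacle.} The delicate point is the behavior of the Laplace transform of $\I$: since $\I(s) \sim e^s$ at infinity, the convergence abscissa is $1$ rather than $0$, so uniqueness of inversion must be invoked on the half-plane $\operatorname{Re} p > 1$, and one has to confirm independently that $\I \ast \J$ is itself of exponential type at most $1$ (which follows from the crude bound $|\J(u)| \le C(1 + |\log u|)$ and the asymptotics of $\I$ recalled above). A real-variable alternative that sidesteps Laplace inversion entirely is to substitute the integral representation of $\I$ directly into $(\I \ast \J)(s)$ and evaluate the resulting inner integrals $\int_0^s (s-u)^{\sigma-1}(-\gamma - \log u)\, du$ via the classical formula $\int_0^1 (1-v)^{\sigma-1} \log v\, dv = -(\psi(\sigma+1) + \gamma)/\sigma$, before carrying out the outer integration in $\sigma$; this is computationally heavier but keeps everything in the real domain.
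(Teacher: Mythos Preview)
Your argument is correct and follows the same two-step structure as the paper: reduce \eqref{eq: i inverse} via Fubini to the single convolution identity $(\I\ast\J)(s)=1$, then verify that identity. The only difference is in the second step: the paper obtains $\int_0^t \I(t-\tau)(-\gamma-\log\tau)\,\diff\tau = 1$ by citing \cite[Lemma~32.1]{SKM} (together with $\partial_t\nu(t)=\I(t)$), whereas you supply a self-contained Laplace-transform computation $\LL[\I](p)=1/\log p$, $\LL[\J](p)=(\log p)/p$, hence $\LL[\I\ast\J](p)=1/p$, and invoke uniqueness. Your route is arguably more transparent and avoids the external reference; the care you take with the abscissa of convergence (since $\I(t)\sim e^t$) is exactly the point that needs checking, and your exponential-type bound on $\I\ast\J$ handles it.
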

	
	\begin{proof}
		We first observe that one has the identity
		\begin{equation}\label{eq: i identity}
			\int_{0}^{t} \diff \tau \: \I(t - \tau) (-\gamma-\log \tau) = 1.
		\end{equation}
		In \cite[Lemma 32.1]{SKM} it is indeed proven that (in the formula proved in the cited Lemma  one should take $ \alpha = 1, h = 0 $)
		\begin{equation}
			\int_{0}^{t} \diff \tau \: \left(\log \tau - \psi(1) \right) \partial_t \nu(t - \tau) = -1,
		\end{equation}
		but, using \cite[Eq. (12), Sect. 18.3]{E}, one can recognize that $ \partial_t \nu(t) = \I(t) $.
		
		Next we note that in the expression
		\bdm
			\lf( I J f \ri)(t) = \int_0^{t} \diff \tau  \int_0^{t-\tau} \diff \sigma \: \I(\tau) \J(t-\tau-\sigma) f(\sigma),
		\edm
		one can exchange the order of the integration, since
		\bmln{
			  \int_0^{t} \diff \tau  \int_0^{t-\tau} \diff \sigma \: \I(\tau) \J(t-\sigma-\tau) f(\sigma)  + \int_0^{t} \diff \sigma  \int_0^{t-\sigma} \diff \tau \: \I(\tau) \J(t-\sigma-\tau) f(\sigma) \\
			  = \int_0^{t} \diff \sigma  \int_0^{t} \diff \tau \: \I(\tau) \J(|t-\sigma-\tau|) f(\sigma) = 2  \int_0^{t} \diff \tau  \int_0^{t-\tau} \diff \sigma \: \I(\tau) \J(|t-\sigma-\tau|) f(\sigma),
		}
		Using \eqref{eq: i identity} we conclude that
		\bdm
			\lf( I J f \ri)(t) = \int_0^{t} \diff \sigma  \int_0^{t-\sigma} \diff \tau \: \I(\tau) \J(t-\sigma-\tau) f(\sigma) = \int_0^{t} \diff \sigma \: f(\sigma).
		\edm
	
 	\end{proof}



\subsection{Derivation of the charge equation}
\label{sec: derivation}

Before starting to discuss the charge equation, we present a heuristic computation which motivates the ansatz \eqref{eq: ansatz}. First of all we set $ s = 0 $ and assume that $  q_j(0) = 0 $. Neglecting any regularity issue, we can compute the time derivative of \eqref{eq: ansatz} and obtain
\bmln{
	i \partial_t \lf( U(t,0) \psi_0 \ri)(\xv) = \lf( - \Delta U_0(t) \psi_0 \ri) (\xv)\\
	  - \frac{1}{2\pi}\sum_{j=1}^N q_j(t)	+ \frac{1}{2\pi} \sum_{j=1}^N \int_0^t \diff \tau \:  \partial_\tau U_0 \lf(t - \tau; |\xv - \yv_j|\ri) \: q_j(\tau) = \\
	 \lf( - \Delta U_0(t) \psi_0 \ri) (\xv)  - \frac{1}{2\pi} \sum_{j=1}^N \int_0^t \diff \tau \:  U_0 \lf(t - \tau; |\xv - \yv_j|\ri) \: \dot q_j(\tau),
}
so that if we take the Fourier transform defined for a function $ f \in L^2(\R^2) $ as
\beq
	\label{eq: fourier}
	\hat{f}(\pv) : = \frac{1}{2\pi} \int_{\R^2} \diff \xv \: e^{-i\pv \cdot \xv} f(\xv),
\eeq
the above expression becomes (we set $ k = |\pv| $)
\beq
	\label{eq: t derivative fourier}
	i \partial_t \lf( \widehat{U(t,0) \psi_0} \ri)(\pv) =  p^2 e^{-ip^2 t} \widehat{\psi_0} (\pv)  - \frac{1}{2\pi} \sum_{j=1}^N \int_0^t \diff \tau \:  e^{- i \pv \cdot \yv_j} e^{-ip^2(t- \tau)} \: \dot q_j(\tau).
\eeq
Similarly, recalling that (see, e.g., \cite[Eq. 6.532.4]{GR})
\beq
	\frac{1}{2\pi} \int_{\R^2} \diff \pv \: \frac{e^{i\pv \cdot \xv}}{p^2 + \la} = \int_0^{\infty} \diff p \: \frac{p J_0(p|\xv|)}{p^2 + \la} = K_0(\sqrt{\la} |\xv|),
\eeq
\bml{
	\label{eq: ham psit fourier}
	\lf(\widehat{\ham  U(t,0) \psi_0}\ri)(\pv) = p^2 \lf( \widehat{U(t,0) \psi_0}(\pv) - \frac{1}{2\pi} \sum_{j=1}^N \frac{q_j e^{-i \pv \cdot \yv_j}}{p^2 + \la} \ri) - \frac{\la}{2\pi} \sum_{j=1}^N \frac{q_j e^{-i \pv \cdot \yv_j}}{p^2 + \la}	\\
	= p^2  e^{-ip^2 t} \widehat{\psi_0} (\pv)  + \frac{1}{2\pi} \sum_{j=1}^N \int_0^t \diff \tau \:  e^{- i \pv \cdot \yv_j} \partial_\tau \lf( e^{-ip^2(t- \tau)} \ri) \: q_j(\tau) - \frac{1}{2\pi} \sum_{j=1}^N  q_j e^{-i \pv \cdot \yv_j}  	\\
	=   p^2 e^{-ip^2 t} \widehat{\psi_0} (\pv)  - \frac{1}{2\pi} \sum_{j=1}^N \int_0^t \diff \tau \:  e^{- i \pv \cdot \yv_j} e^{-ip^2(t- \tau)} \: \dot q_j(\tau),
	}
	which equals \eqref{eq: t derivative fourier}. Therefore,  for any  $ \qv(t) $ and $\psi_0$ for which the right hand side of \eqref{eq: ham psit fourier} is defined, the assumed solution does solve the time-dependent Schr\"{o}dinger equation, at least in a weak sense. To be solution of the charge equation is  the condition that guarantees that for any $ t \in \R $, $ \psi_t \in \dom(\ham) $. Indeed, if we impose the boundary condition as in \eqref{eq: op domain}, we get
\beq
	\label{eq: bc heuristics}
	\frac{1}{2\pi} \int_{\R^2} \diff \pv \: e^{i \pv \cdot \yv_j} \widehat{\phila}(\pv) = \lf( \Gammal \qv \ri)_j(t),
\eeq
and therefore
\bmln{
	\frac{1}{2\pi} \int_{\R^2} \diff \pv \: e^{i \pv \cdot \yv_j} \lf\{ e^{-ip^2 t} \widehat{\psi_0} (\pv) + \frac{i}{2\pi} \sum_{k=1}^N \int_0^t \diff \tau \:  e^{- i \pv \cdot \yv_k} e^{-ip^2(t- \tau)} \: q_k(\tau)	\ri.\\
	\lf. - \frac{1}{2\pi} \sum_{k =1}^N \frac{q_k(t) e^{-i \pv \cdot \yv_k}}{p^2 + \la} \ri\}	
	= \lf( \alpha_j(t) + \tx\frac{1}{2\pi} \log \frac{\sqrt{\la}}{2} - \frac{\gamma}{2\pi} \ri) q_j(t) - \frac{1}{2\pi}\sum_{k \neq j} q_k(t)  K_0(\sqrt{\la} |\yv_j - \yv_k|).
}
The last off-diagonal term cancels exactly and thus the identity becomes
\bmln{
	\frac{1}{2\pi} \int_{\R^2} \diff \pv \: e^{i \pv \cdot \yv_j} \lf\{ e^{-ip^2 t} \widehat{\psi_0} (\pv) + \frac{i}{2\pi} \sum_{k=1}^N \int_0^t \diff \tau \:  e^{- i \pv \cdot \yv_k} e^{-ip^2(t- \tau)} \: q_k(\tau)	\ri.\\
	\lf. - \frac{1}{2\pi} \frac{q_j(t) e^{-i \pv \cdot \yv_j}}{p^2 + \la} \ri\}	
	= \lf( \alpha_j(t) + \tx\frac{1}{2\pi} \log \frac{\sqrt{\la}}{2} + \frac{\gamma}{2\pi} \ri) q_j(t).
}	
Combining the last diverging term on the l.h.s. with the second one, via an integration by parts (here we implicitly assume that the charge belongs to a suitable Sobolev space), we get
\bmln{
	\frac{1}{2\pi} \int_{\R^2} \diff \pv \: \lf\{ e^{i \pv \cdot \yv_j} e^{-ip^2 t} \widehat{\psi_0} (\pv) - \frac{1}{2\pi (p^2+\la)} \int_0^t \diff \tau \:  e^{-ip^2(t- \tau)} \: \lf[ \dot q_j(\tau) - i \la q_j(\tau) \ri]  \ri. \\
	\lf. + \frac{i}{2\pi } \sum_{k \neq j} \int_0^t \diff \tau \:  e^{ i \pv \cdot ( \yv_j - \yv_k)} e^{-ip^2(t- \tau)} \: q_k(\tau) \ri\}	
	= \lf( \alpha_j(t) + \tx\frac{1}{2\pi} \log \frac{\sqrt{\la}}{2} + \frac{\gamma}{2\pi} \ri) q_j(t),
}	
The $\pv$ integral of the second term on the l.h.s. contains an infrared singularity for $ t = \tau $ which goes as $ \log(t - \tau) $: since \cite[Eqs. 3.722.1 \& 3.722.3]{GR}
\bml{
	\int_{\mathbb{R}^{2}} \diff \pv \: \frac{e^{-i p^{2}(t-\tau)}}{p^{2}+\lambda}= - \pi e^{i \lambda(t-\tau)} \lf[ \mathrm{Ci}(\lambda(t-\tau)) -i \mathrm{Si}(\lambda(t-\tau))  \ri] \\
	=    - \pi \lf( \gamma + \log \lambda + \log (t - \tau) \ri) + Q(\lambda; t-\tau)e^{i \lambda (t - \tau)},
}
where $\textrm{Si}( \: \cdot \:)$ and $ \mathrm{Ci}( \: \cdot \:) $ stand for the sine and cosine integral functions \cite[Eqs. 5.2.1 \& 5.2.2]{AS} and (see, e.g., \cite[Eq. 5.2.16]{AS})
\bml{
	Q(\lambda; t - \tau) : = - \pi \lf(1 -e^{i \lambda t} \ri) \lf( \gamma + \log \lambda + \log (t - \tau) \ri) 	\\
	- \pi e^{i \lambda (t - \tau)} \left( \disp\sum_{n=1}^{\infty}\frac{(-(t-\tau)^{2} \lambda^{2})^{n}}{2n(2n)!} - i \textrm{Si}((t - \tau) \lambda) \right).
}
Note that $ Q(0; t - \tau) = 0 $. Hence the charge equation can be rewritten
\bmln{
	\lf(U_0(t) \psi_0\ri)(\yv_j) + \frac{i}{2\pi } \sum_{k \neq j} \int_0^t \diff \tau \:  U_0(t-\tau; |\yv_j - \yv_k|) \: q_k(\tau) 
	- \lf( \alpha_j(t) + \tx\frac{1}{2\pi} \log \frac{\sqrt{\la}}{2} + \frac{\gamma}{2\pi} \ri) q_j(t) \\
	= - \frac{1}{4 \pi} \int_0^t \diff \tau \: \lf( \gamma +\log (t - \tau) + \log\lambda  - \tx\frac{1}{\pi} Q(\lambda; t-\tau) \ri)  \partial_{\tau} \lf( e^{i \lambda (t - \tau)} q_j(\tau) \ri) 
	} 
and taking the limit $ \lambda \to 0 $ (notice the exact cancellation of the diverging $ \log \lambda $ terms)
\bml{
	\label{eq: charge eq heuristics}
	\lf(U_0(t) \psi_0\ri)(\yv_j) + \frac{i}{2\pi } \sum_{k \neq j} \int_0^t \diff \tau \:  U_0(t-\tau; |\yv_j - \yv_k|) \: q_k(\tau) 
	- \lf( \alpha_j(t) - \tx\frac{1}{2\pi} \log 2 + \frac{\gamma}{2\pi} \ri) q_j(t) \\
	= - \frac{1}{4 \pi} \int_0^t \diff \tau \: \lf( \gamma +\log (t-\tau)    \ri)  \dot q_j(\tau).
	} 
	If we now apply to both sides the integral operator $ I $ defined in \eqref{eq: integral i} and exploit the property proven in Lemma \ref{lem: i identity}, we finally recover the charge equation \eqref{eq: charge eq}.

\subsection{Charge equation} We now consider the charge equation and its solution.

	\begin{pro}[Existence and uniqueness of solutions to \eqref{eq: charge eq}]
		\label{pro: uniqueness}
		\mbox{}	\\
		Given $\psi_{s}\in \dom(H_{\alv(s)}) $, the solution $ \qv(t)$ of the charge equation \eqref{eq: charge eq} exists and is unique in $ C(0,T) $ for any $T < \infty$. Moreover $ \qv(t) \in  H^{\nu}(0,T) $ for any $ \nu < 3/4 $.
	\end{pro}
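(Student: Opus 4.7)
The plan is to treat the charge equation \eqref{eq: charge eq} as a Volterra integral equation of the second kind in $C([s,T])$ with a weakly singular kernel $\I$, solve it by a Banach fixed-point argument on short time intervals, and then bootstrap the regularity to the Sobolev scale using the smoothing properties of the operator $I$.

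Rewrite the equation as $\qv = \fv - \mathcal{K}\qv$ with $(\mathcal{K}\qv)(t) := \int_s^t \KK(t,\tau)\,\qv(\tau)\,\diff\tau$. The diagonal entries of $\KK$ are $\I(t-\tau)$ multiplied by the continuous function $\alpha_j(\tau)-\frac{1}{2\pi}\log 2+\frac{\gamma}{2\pi}$, and the off-diagonal entries are $\I(t-\tau)$ multiplied by $\int_s^\tau U_0(\tau-\sigma;|\yv_j-\yv_k|)\,\diff\sigma$, which is continuous in $\tau\geq s$ and vanishes at $\tau=s$ by oscillatory cancellation of $U_0(\cdot;r)$ for $r>0$. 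Since $\I\in L^1_{\mathrm{loc}}$ with $\I(t)\sim 1/(t\log^2(1/t))$ as $t\to 0^+$, the operator $\mathcal{K}$ maps $C([s,T])$ into itself with operator norm controlled by $C\int_0^{T-s}|\I(\tau)|\,\diff\tau$, which can be made strictly less than $1$ on any short enough subinterval $[s,s+\delta]$.

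To apply the contraction mapping theorem I first check that $\fv\in C([s,T])$. Using the decomposition of $\psi_s\in\dom(H_{\alv(s)})$ as $\psi_s=\phi+\frac{1}{2\pi}\sum_k q_k(s) K_0(\sqrt{\la}|\cdot-\yv_k|)$ with $\phi\in H^2(\R^2)$, the regular part contributes $(U_0(\tau-s)\phi)(\yv_j)$, which is continuous in $\tau$ by Sobolev embedding and the unitarity of $U_0$ on $H^2(\R^2)$, while the singular part contributes $(U_0(\tau-s)K_0(\sqrt{\la}|\cdot-\yv_k|))(\yv_j)$, whose Fourier representation shows it is continuous in $\tau>s$ with at worst a logarithmic singularity as $\tau\to s^+$ in the diagonal case $j=k$. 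This singularity is integrable against $\I$, so $\fv\in C([s,T])$. Existence and uniqueness on $[s,s+\delta]$ then follow from Banach's theorem; the solution is extended to the whole of $[s,T]$ by iterating on consecutive subintervals, and global uniqueness follows from a standard Volterra--Gronwall argument applied to the difference of two solutions.

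For the Sobolev regularity, write $\fv=4\pi I\tilde g$ with $\tilde g_j(\tau):=(U_0(\tau-s)\psi_s)(\yv_j)$. A refined analysis of the short-time behavior near $\tau=s$ shows $\tilde g\in H^\nu(s,T)$ for every $\nu<1/2$, the threshold being dictated by the logarithmic singularity of the diagonal contribution. The smoothing action of $I$ established in \cite{CF} then yields $\fv\in H^\nu(s,T)$ for every $\nu<3/4$. Since every entry of $\KK$ is $\I(t-\tau)$ multiplied by a continuous function of $\tau$, Lemma \ref{lem: i pro} implies that $\mathcal{K}$ preserves $H^\nu(s,T)$ in the same range; inserting this into $\qv=\fv-\mathcal{K}\qv$ propagates the regularity to $\qv$. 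The main obstacle will be the sharp analysis yielding exactly the threshold $\nu<3/4$, combining a precise estimate of the short-time behavior of $(U_0(\tau-s)\psi_s)(\yv_j)$ with the full smoothing gain of the Volterra operator $I$, going beyond the mere preservation given by Lemma \ref{lem: i pro}.
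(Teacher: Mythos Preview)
Your overall strategy---Volterra/contraction for existence and uniqueness, decomposition of $\psi_s$ to check continuity of $\fv$, then a bootstrap for the Sobolev regularity of $\qv$---matches the paper's. The genuine gap is in how you reach the sharp threshold $\nu<3/4$ for $\fv$.

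You attribute the $3/4$ to a smoothing gain of $I$: you place $\tilde g\in H^\nu$, $\nu<1/2$, with the bottleneck being the diagonal $\log(\tau-s)$ singularity, and then hope that $I$ lifts this by a quarter. But the only statement you have is Lemma~\ref{lem: i pro}, which gives \emph{preservation}, and you correctly flag this as ``the main obstacle''. There is no generic Sobolev gain for $I$: its Laplace symbol is $1/\log p$, which does not improve polynomial decay. So as written, your route to $\fv\in H^{3/4-}$ is incomplete.

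The paper resolves this differently, treating the three pieces of $(U_0(\tau)\psi_s)(\yv_j)$ separately and exploiting a piece-specific mechanism:
\begin{itemize}
  \item The regular part $(U_0(\tau)\philas)(\yv_j)$ lies in $H^1(0,T)$ by a direct Fourier computation; $I$ preserves this.
  \item The off-diagonal singular part $(U_0(\tau)T_{\yv_k}K_0(\sqrt{\la}\,|\cdot|))(\yv_j)$, $k\neq j$, is shown to lie in $H^\nu(0,T)$ for every $\nu<3/4$ \emph{already before} applying $I$, using the Bessel decay $J_0(p|\yv_j-\yv_k|)\sim p^{-1/2}$; $I$ preserves this. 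This is where the threshold $3/4$ actually originates.
  \item The diagonal singular part is the one with the $\log(\tau-s)$ behavior. Here the paper does not rely on smoothing at all: it uses the algebraic identity of Lemma~\ref{lem: i identity}, namely $I(-\gamma-\log(\cdot))=1$, to compute $I$ of this term explicitly as a constant plus a smooth remainder.
\end{itemize}
So the ``smoothing'' you are looking for is not a general property of $I$ but the specific inversion $IJ=\int$ applied to the logarithmic kernel, and the $3/4$ comes from the off-diagonal piece, not from lifting the diagonal one. Once you route the argument through Lemma~\ref{lem: i identity} for the diagonal term and the Bessel estimate for the off-diagonal terms, your final bootstrap for $\qv$ (invert $1+\mathcal{K}$ in $H^\nu$, $\nu<3/4$, on short intervals via Lemma~\ref{lem: i pro}) goes through exactly as in the paper.
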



	\begin{proof}	
		According to the general theory of Volterra integral equations \cite{M}, specialized to the linear case there exists at least one continuous solution to \eqref{eq: charge eq} and it is unique if the following conditions are satisfied:
	
	\ben[(a)]
			\item $ \fv(t) $ is continuous on $ t \in \R^+ $;
			\item $ \KK(t,\tau) $ is measurable and $\KK(t, \cdot) \in L^1(0,t) $ for any finite $ t $;
		\een
Let us   check that the integral kernel $ \KK_{jk}(t, \tau) \in L^1(0,t) $ 
		In particular  to see that $ \KK_{jk}(t,\tau) $ is integrable, it suffices to notice that the diagonal term is $ L^1 $ because $ \I(t-\tau) $ is, while for $ j \neq k $
		\bdm
			\lf| \KK_{jk}(t,\tau) \ri| \leq  2 \lf| \I(t-\sigma) \ri| \lf|\disp\int_s^\tau \diff \sigma \:  U_0(\tau-\sigma; |\yv_j - \yv_k|)\ri| \leq C_t \lf| \I(t-\sigma) \ri|,
		\edm
		for some $ C_t < \infty $, if $ t < \infty $. Indeed the integral $$ \int_s^\tau \diff \sigma \:  U_0(\tau-\sigma; |\yv_j - \yv_k|)$$
	is explicitly computable and it is finite for any $ \tau\geq 0 $, if $ |\yv_j - \yv_k| > 0 $.

		Therefore to complete the proof we need to show that $ \fv(t) $ is a continuous function of $ t $ in any compact subset of $ \R^+ $. By hypothesis $ \psi_s \in \dom(H_{\alv(s)}) $ and thus it can be decomposed as in \eqref{eq: op domain}, i.e.,
		\bml{
			\label{eq: fjt}
			f_j(t) = 4 \pi \int_{s}^{t} \diff \tau \: \I(t - \tau) \bigg\{ \lf(U_{0}(\tau) \philas \ri)(\yv_j) + 2 q_j(s) \int_{\R^2} \diff \xvp \: U_{0}(\tau;|\yv_j - \xvp|) K_0(\xvp - \yv_j)  \\
			 + 2 \sum_{k \neq j} q_k(s) \int_{\R^2} \diff \xvp \: U_{0}(\tau;|\yv_j - \xvp|) K_0(\xvp - \yv_k) \bigg\}
		}
		with $ \philas \in H^2(\R^2) $.
	
		We first consider the term involving $ \philas $: applying the Fourier transform, we have
		\bmln{
			4 \pi \lf(U_{0}(\tau) \philas\ri)(\yv_j) = 2 \int_{\R^2} \diff \pv \: e^{i\pv \cdot \yv_j} e^{-i p^2\tau} \widehat{\philas}(\pv)
			= 2 \int_{\mathbb{R}^{2}} \diff \pv \: e^{-i p^2\tau} \widehat{\lf(T_{\yv_j}^{-1} \philas\ri)}(\pv) \\
			= 2\pi \int_{0}^{\infty} \diff \varrho \: e^{-i \varrho \tau} \mean{\widehat{\lf(T_{\yv_j}^{-1} \philas\ri)}}(\sqrt{\varrho}) 
			= (2\pi)^{3/2} \lf(\F G(\varrho) \ri)(t)
		}
		where 
		\bdm
			G_1(\varrho) : = \one_{[0,+\infty)}(\varrho) \: \mean{\widehat{\lf(T_{\yv_j}^{-1} \philas\ri)}}(\sqrt{\varrho}),
		\edm
		$ \F $ stands for the Fourier transform in $ L^2(\R) $ and we have denoted by $ \mean{f} $ the angular average of a function on $ \R^2 $, i.e.,
		\beq
			\mean{f}(p) = \frac{1}{2\pi} \int_0^{2\pi} \diff \vartheta \: f(p,\vartheta).
		\eeq
		In order to bound the norm of $ \lf(U_{0}(\tau) \philas\ri)(\yv_j) $ in $ H^{\nu}(\R) $, we estimate
		\bml{
			4\pi \int_{\mathbb{R}} \diff \varrho \: \lf|\varrho\ri|^{2\nu} \lf| \left( \mathcal{F} \lf(U_{0}(\: \cdot \:) \philas\ri)(\yv_j)  \right)(\varrho)\right|^{2} \\
			= 32 \pi^4 \int_{\mathbb{R}} \diff \varrho \: \lf|\varrho\ri|^{2\nu} \lf| \left( \mathcal{F}^{-1} (\F G_1)  \right)(-\varrho)\right|^{2} \\
			= 64 \pi^4 \int_{0}^{\infty} \diff p \: p^{4\nu+1} \lf| \mean{\widehat{\lf(T_{\yv_j}^{-1} \philas\ri)}}(p) \right|^{2} \leq C \int_{\R^2} \diff \pv \: p^{4\nu} \lf| \widehat{\lf(T_{\yv_j}^{-1} \philas\ri)}(\pv) \right|^{2}.
		}
		Since $ \philas \in H^2 $, the last integral is bounded for any $ 0 \leq \nu \leq 1 $ and therefore $ \lf(U_{0}( \: \cdot \:) \philas\ri)(\yv_j) \in H^1(0,T) $. Thanks to Lemma \ref{lem: i pro} the Sobolev degree is conserved by the action of $ I $ and therefore the first term in \eqref{eq: fjt} is in $ H^1(0,T) $, which implies, via Sobolev inequality, that it is a continuous function of $ t \in \R^+ $.

		Let us consider now the sum in the last term in \eqref{eq: fjt}: we first rewrite
		\bml{
				\lf( U_0(\tau) T_{\yv_k} K_0(\sqrt{\la} \: \cdot \:) \ri)(\yv_j) = \int_{\R^2} \diff \xvp \: U_{0}(\tau;|\yv_j - \xvp|) K_0(\xvp - \yv_k)  \\
		=  \frac{1}{2 \pi} \int_{\R^2} \diff \pv \: \frac{e^{i \pv \cdot (\yv_j - \yv_k)} e^{-i p^2 \tau}}{p^2 + \la} =  \frac{1}{2 \pi} \int_{0}^{\infty} \diff p \: p \int_0^{2\pi} \diff \vartheta \:\frac{e^{i p |\yv_j - \yv_k| \cos \vartheta} e^{-i p^2 \tau}}{p^2 + \la} \\ = \int_{0}^{\infty} \diff p \: p \: \frac{  e^{-i p^2 \tau}J_0(p |\yv_j - \yv_k|)}{p^2 + \la} = \pi \lf(\F G_2\ri)(\tau),
			}
		with 
		\bdm
			G_2(\varrho) = \one_{[0,+\infty)}(\varrho) \frac{J_0(\sqrt{\varrho} |\yv_j - \yv_k|)}{\varrho + \la}.
		\edm
		As before in order to bound the $ H^{\nu}$-norm of $ ( U_0(\tau) T_{\yv_k} K_0(\sqrt{\la} \: \cdot \:) )(\yv_j) $ we estimate, using the asymptotics of Bessel functions for large argument (see \cite[Eq. 9.2.1]{AS})
		\bml{
			\int_{\mathbb{R}} \diff \varrho \: \left|\varrho \ri|^{2\nu} \lf| \lf(\mathcal{F} \lf( U_0(\tau) T_{\yv_k} K_0(\sqrt{\la} \: \cdot \:) \ri)(\yv_j) \ri)(\varrho) \ri|^2 = \pi^2 \int_{\mathbb{R}} \diff \varrho \: \left|\varrho \ri|^{2\nu} \lf| \lf(\mathcal{F}^{-1} \F G_2 \ri)(-\varrho)  \ri|^2	\\
			= \pi^2 \int_{0}^{\infty} \diff p \: p^{4\nu +1}  \frac{J^2_0(p |\yv_j - \yv_k|)}{\lf(p^2 + \la\ri)^2} \leq C \int_{0}^{\infty} \diff p \: p^{4\nu+1} \frac{1}{(p + 1)\lf(p^2 + \la\ri)^2},
		}
		which is finite for any $ \nu < 3/4 $. This implies that $ ( U_0(\tau) T_{\yv_k} K_0(\sqrt{\la} \: \cdot \:) )(\yv_j) $ belongs to $ H^{\nu}(0,T) $, for any $ \nu < 3/4 $. The action of $ I $ does not change the Sobolev degree and therefore the last term in \eqref{eq: fjt} is continuous in $ t $, thanks again to Sobolev inequality.

		The last term to consider is the second one in \eqref{eq: fjt}: as before we have
		\bml{
			\lf( U_0(\tau) T_{\yv_j} K_0(\sqrt{\la} \: \cdot \:) \ri)(\yv_j) = \int_{\R^2} \diff \xvp \: U_{0}(\tau;|\yv_j - \xvp|) K_0(\xvp - \yv_j)  
			=  \frac{1}{2} \int_{0}^{\infty} \diff \varrho \: \frac{e^{-i \varrho \tau}}{\varrho + \la}	\\ 
			= \tx\frac{1}{2} e^{i \lambda \tau} \lf(i \mathrm{Si}(\lambda \tau) - \mathrm{Ci}(\lambda \tau) \ri) =   - \tx\frac{1}{2} \lf( \gamma + \log \lambda + \log \tau \ri) + \tx\frac{1}{2\pi} Q(\lambda; \tau)e^{i \lambda \tau},
	}
		Hence
		\beq
			I \lf[ \lf( U_0(\tau) T_{\yv_j} K_0(\sqrt{\la} \: \cdot \:) \ri)(\yv_j) \ri] = 1 + \int_s^t \diff \tau \: \I(t - \tau) \lf[ -\tx\frac{1}{2} \log \lambda + \tx\frac{1}{2\pi} Q(\lambda; \tau) \ri]
		\eeq
		but since $ Q(\lambda,\tau) $ is a smooth function on any compact set, the same applies to the second term in \eqref{eq: fjt}, which is continuous as well.
		
		In order to prove the last statement it suffices to apply a bootstrap like argument: for sufficiently small times the charge equation can be solved since the operator $ 1 + \mathcal{K} $ is invertible in $ H^{\nu}(0,T) $, $ \nu < 3/4 $. This is a consequence of \eqref{eq: norm i estimate} and differentiability of $ \alpha(t) $. Hence $ q \in H^\nu(0,T) $, $ \nu < 3/4 $, for small enough $ T $, but then one can repeat the argument with initial condition $ q(T) $ so proving the statement.		
	\end{proof}

\subsection{Time-evolution in the form and operator domains}

In this Section we show that the form domain is invariant under $ U(t,s) $. 

	\begin{pro}[Invariance of \mbox{$ \dom[ \F ] $} for initial data in $ \dom(H_{\alv(s)}) $]
		\label{pro: invariance}
		\mbox{}	\\
		Let $ \qv(t) $ be the unique solution to \eqref{eq: charge eq} with initial condition $ \lf. \qv(t) \ri|_{t = s} = \qv(s) $ and $ \psi_s \in \dom(H_{\alv(s)}) $, then $ U(t,s) \psi_s \in \dom[\F] $ for any $ t \in \R $.
	\end{pro}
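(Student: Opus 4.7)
The plan is to verify directly that $U(t,s)\psi_s$ admits the singular/regular decomposition required for membership in $\dom[\F]$. Setting
\[
\phi_\la(t) := U(t,s)\psi_s - \tx\frac{1}{2\pi}\sum_{j=1}^N q_j(t)K_0(\sqrt\la|\xv-\yv_j|),
\]
with $\qv(t)$ the unique continuous solution of \eqref{eq: charge eq} from Proposition \ref{pro: uniqueness}, I would show $\phi_\la(t)\in H^1(\R^2)$. The hypothesis $\psi_s \in \dom(H_{\alv(s)})$ supplies the initial-time decomposition $\psi_s = \phi_\la(s) + \tx\frac{1}{2\pi}\sum_j q_j(s)K_0(\sqrt\la|\cdot-\yv_j|)$ with $\phi_\la(s)\in H^2(\R^2)$.

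I would work in Fourier space, using that $K_0(\sqrt\la|\xv - \yv_j|)$ has Fourier transform proportional to $e^{-i\pv\cdot\yv_j}/(p^2+\la)$ and that $U_0(\tau)$ acts as multiplication by $e^{-ip^2\tau}$. Expanding $\widehat{U(t,s)\psi_s}(\pv)$ from \eqref{eq: ansatz} and subtracting the Fourier transform of the singular part, I would then perform a single integration by parts in $\tau$ on the resulting time integral, based on the identity $(p^2+\la)e^{-ip^2(t-\tau)} = -i\partial_\tau e^{-ip^2(t-\tau)} + \la e^{-ip^2(t-\tau)}$. The boundary terms at $\tau = t$ and $\tau = s$ produced by the IBP are proportional to $q_j(t)e^{-i\pv\cdot\yv_j}/(p^2+\la)$ and $q_j(s)e^{-ip^2(t-s)}e^{-i\pv\cdot\yv_j}/(p^2+\la)$ respectively, and cancel exactly against the subtracted singular part and against the propagated initial singular part of $\psi_s$. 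What remains is schematically
\bmln{
\widehat{\phi_\la(t)}(\pv) = e^{-ip^2(t-s)}\widehat{\phi_\la(s)}(\pv) \\
- \tx\frac{1}{2\pi}\sum_j \frac{e^{-i\pv\cdot\yv_j}}{p^2+\la}\int_s^t e^{-ip^2(t-\tau)}\dot q_j(\tau)\,d\tau + \tx\frac{i\la}{2\pi}\sum_j \frac{e^{-i\pv\cdot\yv_j}}{p^2+\la}\int_s^t e^{-ip^2(t-\tau)}q_j(\tau)\,d\tau.
}

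The first term lies in $H^2 \subset H^1$ since $U_0$ preserves every Sobolev space. For the remaining two, passing to radial coordinates in $\pv$ and to the variable $u = p^2$ reduces the $H^1(\R^2)$-norm to one-dimensional integrals of $(1+u)(u+\la)^{-2}$ against the squared moduli of the half-line Fourier transforms $P_j(u) := \int_s^t e^{-iu(t-\tau)}\dot q_j(\tau)\,d\tau$ and $Q_j(u) := \int_s^t e^{-iu(t-\tau)}q_j(\tau)\,d\tau$. The algebraic identity $P_j(u) + iuQ_j(u) = q_j(t) - q_j(s)e^{-iu(t-s)}$ (a further IBP) reduces the $P_j$-estimate to a $Q_j$-estimate, and Plancherel together with the Sobolev regularity $q_j \in H^\nu(s,t)$ for every $\nu < 3/4$ delivered by Proposition \ref{pro: uniqueness} yields the desired finiteness.

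The main technical obstacle is the delicate treatment of the boundary values $q_j(s), q_j(t)$ in the Plancherel step: since $q_j$ need not vanish at the endpoints, the zero-extension of $q_j$ to $\R$ inherits only $H^\nu(\R)$-regularity for $\nu<1/2$, whereas the available regularity lies in $(1/2, 3/4)$. A clean remedy is to split $q_j$ into an affine interpolant of its boundary data — which contributes only elementary Fourier integrals that can be estimated by hand — plus a zero-trace correction in $H^\nu_{00}(s,t)$ that extends continuously into $H^\nu(\R)$ and to which Plancherel applies directly.
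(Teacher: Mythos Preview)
Your approach is essentially the paper's: pass to Fourier variables, integrate by parts in $\tau$ so that the boundary contributions cancel exactly against the subtracted singular part and the propagated initial singular part, and then control the remaining integral term through the $H^\nu$ regularity of $\qv$, $1/2<\nu<3/4$, supplied by Proposition~\ref{pro: uniqueness}. The only difference is the device for handling the non-vanishing endpoint values of $q_j$: you subtract an affine interpolant to reduce to a zero-trace remainder in $H^\nu_{00}(s,t)$, whereas the paper extends $q_j$ by constants outside $[s,t]$ so that the weak derivative of the extension equals the zero-extension of $\dot q_j$ and lies in $H^{\nu-1}(\R)\subset H^{-1/2}(\R)$---both tricks are equivalent and rest on the same Sobolev input.
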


	\begin{proof}
		In order to prove the statement we need to show that 
		\bdm
			\lf(U(t,s)\psi_s\ri)(\xv) - \frac{1}{2\pi} \sum_{j =1}^N q_j(t) K_0(\sqrt{\la}|\xv - \yv_j|) \in H^1(\R^2),
		\edm
		whenever $ \psi_s \in \dom[\F] $. Setting for simplicity $ s = 0 $ and passing to the Fourier representation, this is equivalent to require that the following function of $ \pv $
		\beq
			e^{-i p^2 t} \widehat{\psi_0}(\pv) + \frac{i}{2\pi} \sum_{j=1}^N \int_0^t \diff \tau \: e^{i \pv \cdot \yv_j} e^{-i p^2 (t - \tau)} q_j(\tau) - \frac{1}{2\pi} \sum_{j=1}^N \frac{q_j(t) e^{i \pv \cdot \yv_j}}{p^2 + \la}	
		\eeq
		belongs to $ L^2(\R^2, (p^2+1)\diff \pv) $. After an integration by parts the above expression becomes (here $ \dot q_j $ stands for the weak derivative of $ q_j $, which belongs at least to $ H^{\nu}(0,T) $, $ \nu < -1/4 $, since $ q_j(t) \in H^{\nu} $, $ \nu < 3/4 $ by Proposition \ref{pro: uniqueness})
		\bml{
			\label{eq: decomposition psit}
			e^{-i p^2 t} \widehat{\psi_0}(\pv) - \frac{1}{2\pi} \sum_{j=1}^N \frac{q_j(0) e^{i \pv \cdot \yv_j} e^{-ip^2 t}}{p^2 + \la} \\
			+ \frac{1}{2\pi (p^2 + \la)} \sum_{j=1}^N \int_0^t \diff \tau \: e^{i \pv \cdot \yv_j} e^{-i (p^2 + \la) (t - \tau)} \partial_\tau \lf( e^{i\la \tau} q_j(\tau) \ri).
		}
		Now the first two terms represent the free evolution of the regular part $ \phi_{\la,0} $ of the initial state $ \psi_0 $. Since by hypothesis $ \psi_0 \in \dom[\F] $, then $ \phi_{\la,0} \in H^1(\R^2) $ and therefore the sum of those two terms (or rather their Fourier anti-transform) belongs to $ H^1(\R^2) $ as well.

		It remains then to prove that the last term in \eqref{eq: decomposition psit} is in $ L^2(\R^2, (p^2+1)\diff \pv)  $: setting $ z(t) = \partial_t (e^{i\la (t-\tau)} q_j(t)) $ and calling each term of the sum $ g_j(\pv) $ for short, we have
		\bdm
			\int_{\R^2} \diff \pv \: (p^2 +1) \lf| g_j(\pv) \ri|^2 = \frac{1}{8} \int_{0}^{\infty} \diff \varrho \: \frac{\varrho + 1}{(\varrho + \la)^2} \: \lf| \lf( \F z_{0,t} \ri)(\varrho) \ri|^2 
		\edm
		where we have denote  for any function $ f: [0,T] \to \C $ and $ 0 \leq a < b \leq T $,
		\beq
			f_{a,b}(t) : = f(t) \one_{[a,b]}(t).
		\eeq
		Now the r.h.s. is obviously bounded by 
		\bdm
			C \lf\| z_{0,t} \ri\|_{H^{-1/2}(\R)}^2 \leq C_t \lf( \lf\| \lf(q_j\ri)_{0,t} \ri\|_{H^{-1/2}(\R)}^2 + \lf\| \lf(\dot q_j\ri)_{0,t} \ri\|_{H^{-1/2}(\R)}^2 \ri),
		\edm
		for some $ C_t < \infty $ for finite $ t $.
		In \cite[Lemma 2.1]{CFNT} it is proven that if $ f \in H^{\nu}(0,T) $ for $ 0 < \nu < 3/2 $ but $ \nu \neq 1/2 $ (see also \cite[Lemma 5]{AT}), then $ f_{0,t} \in H^\nu(\R) $. Therefore the first term on the r.h.s. of the above expression is always bounded, since by Proposition \ref{pro: uniqueness}, $ q_j \in H^{\nu}(0,T) $ for any $ 1/2 < \nu < 3/4 $. For the second term one can not apply directly  \cite[Lemma 2.1]{CFNT} because the Sobolev degree of $ \dot q_j $ is negative, but one can circumvent such a problem by modifying the extension of $ \dot q_j $ \cite[Proof of Theorem 4]{ADFT}: let $ Q_{j}(\tau) $, $ \tau \in \R $, be the following function which extends $ q_j(\tau) $
		\bdm
			Q_j(\tau) = 
			\begin{cases}
				q_j(0)		&		\mbox{for } \tau < 0,	\\
				q_j(t)		&		\mbox{for } \tau > t,		\\
				q_j(\tau)	&		\mbox{for } 0 \leq \tau \leq t.
			\end{cases}
		\edm
		Then one has that $ \dot Q_j = \lf(\dot q_j\ri)_{0,t} $ but $ Q_j \in H^{\nu}_{\mathrm{loc}}(\R) $  for any $ 1/2 < \nu < 3/4 $, which implies that $ \dot Q_j \in H^{\nu-1}(\R) $, since it is compactly supported. In conclusion $ \lf(\dot q_j\ri)_{0,t} \in H^{\nu}(\R) $ for any $ -1/2 < \nu < -1/4 $ and the second term is bounded as well.
	\end{proof}
	
\noindent
The above result in combination with the heuristic computation made at the beginning of Section \ref{sec: derivation} yields the following very important.

	\begin{cor}
		\label{cor: weak se}
		\mbox{}	\\
		Let $ \psi_s \in \dom[\F] $, then $ U(t,s) \psi_s $ solves the time-dependent Schr\"{o}dinger equation \eqref{eq: generator} in the quadratic form sense, i.e., for any $ \phi \in \dom[\F] $,
		\beq
			\label{eq: weak se}
			i \partial_t \braket{\phi}{U(t,s) \psi_s} = \form[\phi,U(t,s) \psi_s],
		\eeq
		for any $ t \in \R $ and with $ \form[ \, \cdot \, , \, \cdot \,] $ standing for the sesquilinear form associated to the quadratic form $ \form[ \, \cdot \, ]$.
	\end{cor}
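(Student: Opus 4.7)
The plan is to make the heuristic Fourier computation of Section \ref{sec: derivation} rigorous as a pairing identity against test vectors $\phi \in \dom[\F]$, using form-domain invariance from Proposition \ref{pro: invariance}. I would first establish the identity for $\psi_s \in \dom(H_{\alv(s)})$ and then extend to $\psi_s \in \dom[\F]$ by density.

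For $\psi_s \in \dom(H_{\alv(s)})$, Proposition \ref{pro: invariance} ensures that $\psi_t := U(t,s)\psi_s \in \dom[\F]$ at every time, with charges $q_j(t) \in H^\nu(0,T)$, $\nu < 3/4$, by Proposition \ref{pro: uniqueness}. Fix $\phi \in \dom[\F]$ with decomposition $\phi = \phi_\lambda^{\phi} + \tfrac{1}{2\pi}\sum_j q_j^{\phi} K_0(\sqrt{\lambda}|\cdot - \yv_j|)$. The proof then amounts to repeating the manipulations \eqref{eq: t derivative fourier}--\eqref{eq: charge eq heuristics} with the factor $e^{i\pv\cdot\yv_j}$ replaced by $\widehat{\phi}^{*}(\pv)$ and integrated against $\diff\pv$: integration by parts in $\tau$ moves the time derivative of $e^{-ip^2(t-\tau)}$ onto $\dot q_j(\tau)$; the $\pv$-integrals of $(p^2+\la)^{-1}e^{-ip^2(t-\tau)}$ are evaluated explicitly in terms of sine and cosine integrals, isolating a $\log\la + \log(t-\tau)$ singularity; the $\log\la$ contributions cancel between the terms arising from the Fourier representation of $\psi_t$ and those implicit in the form expression \eqref{eq: form} of $\form[\phi,\psi_t]$, while the remaining $\log(t-\tau)$ kernel is recognized as the operator $J$. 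The identity $IJ = \mathrm{id}$ of Lemma \ref{lem: i identity} then reduces the matching to the charge equation \eqref{eq: charge eq}, which $\qv$ satisfies by construction, so \eqref{eq: weak se} follows.

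The extension to general $\psi_s \in \dom[\F]$ is by density: $\dom(H_{\alv(s)})$ is $\F$-dense in $\dom[\F]$, and both sides of \eqref{eq: weak se} depend continuously on $\psi_s$ in the form topology. The right-hand side is continuous trivially by continuity of $\form[\phi,\cdot]$, while for the left-hand side the map $\psi_s \mapsto \qv(t)$ inherits continuity from the linearity of the Volterra equation \eqref{eq: charge eq} and the bound of Lemma \ref{lem: i pro}, uniformly on compact time intervals, which also justifies passage to the limit in the distributional time derivative.

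The main obstacle I expect is the rigorous justification of the integration by parts in $\tau$, since $\dot q_j$ is only a distribution of negative Sobolev regularity $H^{\nu-1}$ with $\nu < 3/4$. I would handle this by smoothing $\qv$ by convolution, carrying out the integration by parts for the smoothed charges, and passing to the limit using the extension/truncation trick of \cite[Lemma 2.1]{CFNT} already invoked in the proof of Proposition \ref{pro: invariance}, together with the $H^\nu$-continuity of $I$ from Lemma \ref{lem: i pro}. A secondary technical point is the Fubini exchange of the $\diff\pv$ and $\diff\tau$ integrals against a slowly-decaying logarithmic integrand, which must be justified via the same truncation and a careful inspection of the explicit formula for $\int_{\R^2}\diff\pv \: e^{-ip^2(t-\tau)}/(p^2+\la)$ used above.
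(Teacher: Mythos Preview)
Your proposal is correct in outcome but takes a considerably more involved route than the paper. The paper's proof is two sentences: the Fourier identities \eqref{eq: t derivative fourier} and \eqref{eq: ham psit fourier}, which exhibit $i\partial_t\widehat{\psi_t}$ and the formal Fourier transform of $\ham\psi_t$ as the \emph{same} explicit expression, become rigorous statements once paired against any $\phi\in\dom[\F]$, and Proposition~\ref{pro: invariance} supplies the form-domain membership of $\psi_t$ needed to interpret the right-hand side as $\form[\phi,\psi_t]$. Crucially, the paper stops at \eqref{eq: ham psit fourier} and does \emph{not} invoke the $\log(t-\tau)$ kernel, the identity $IJ=\mathrm{id}$ of Lemma~\ref{lem: i identity}, or the charge equation at this stage; those ingredients are reserved for Lemma~\ref{lem: op domain invariant}, where the boundary condition (operator-domain membership) is verified.

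Your approach instead runs the full chain \eqref{eq: t derivative fourier}--\eqref{eq: charge eq heuristics} and uses the charge equation to match the $\alpha_j$-dependent charge terms in the form against the contribution from the singular part of $\phi$. In effect you are folding the content of Lemma~\ref{lem: op domain invariant} into the Corollary. This is not incorrect --- pairing against the singular part of $\phi$ does reproduce the boundary-condition computation, and the match does rely on $\qv$ solving \eqref{eq: charge eq} --- but it blurs the paper's deliberate separation between the form-level statement (Corollary) and the operator-level statement (Lemma~\ref{lem: op domain invariant}). Your density extension to general $\psi_s\in\dom[\F]$ addresses a genuine discrepancy between the Corollary's hypothesis and what Proposition~\ref{pro: invariance} actually provides; note, however, that the paper never carries out this extension and only ever applies the Corollary (in Lemma~\ref{lem: isometry}) to data in $\dom(H_{\alv(s)})$, so the extra generality is not used downstream.
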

	
	\begin{proof}
		It suffices to note that the identities proven at the beginning of Section \ref{sec: derivation}, up to \eqref{eq: ham psit fourier}, are in fact rigorous once projected onto a state $ \phi \in \dom[\F] $. The result of Proposition \ref{pro: invariance} completes the argument.
	\end{proof}
	
	\subsection{Completion of the proof}

	In order to complete the proof of Theorem \ref{teo: evolution}, we have to show that $ U(t,s): \dom(H_{\alv(s)}) \to \dom(\ham) $ and it is an isometry in that subspace.
	
	\begin{lem}
		\label{lem: op domain invariant}
		\mbox{}	\\
		Let $ \psi_s \in \dom(H_{\alv(s)}) $, then for any $ t \in \R $, $ U(t,s) \psi_s \in \dom(\ham) $.
	\end{lem}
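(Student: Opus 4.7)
The lemma requires verifying, for $\psi_t := U(t,s)\psi_s$, the two defining features of $\dom(\ham)$ recorded in \eqref{eq: op domain}: that the regular part $\phi_{\lambda,t}$ belongs to $H^2(\R^2)$, and that the pointwise boundary condition $\lim_{\xv\to\yv_j}\phi_{\lambda,t}(\xv) = (\Gammal\qv(t))_j$ holds at every scattering centre. Proposition \ref{pro: invariance} already guarantees $\psi_t \in \dom[\F]$, so the decomposition $\psi_t = \phi_{\lambda,t} + \frac{1}{2\pi}\sum_j q_j(t)K_0(\sqrt{\lambda}|\xv - \yv_j|)$ is available with $\phi_{\lambda,t}\in H^1(\R^2)$; what remains is to upgrade the regularity of $\phi_{\lambda,t}$ and to check the trace condition.

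For the $H^2$ regularity I would work directly from the Fourier representation \eqref{eq: decomposition psit} established inside the proof of Proposition \ref{pro: invariance}. Its first two contributions combine into $\widehat{U_0(t-s)\phi_{\lambda,s}}$, which lies in $L^2(\R^2,(p^2+1)^2\diff\pv)$ because under the stronger hypothesis $\psi_s\in\dom(H_{\alv(s)})$ one now has $\phi_{\lambda,s}\in H^2(\R^2)$ rather than merely $H^1$. The remaining term is a time integral of $\partial_\tau(e^{i\lambda\tau}q_j(\tau))$ weighted by $(p^2+\lambda)^{-1}$; by Plancherel in $\tau$ together with polar integration in $\pv$, its $H^2$-boundedness reduces to controlling $\dot{q}_j$ in $L^2(0,t)$. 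This calls for upgrading Proposition \ref{pro: uniqueness}, which provides only $\qv\in H^{3/4-\epsilon}(0,T)$, to $\qv\in H^1_{\mathrm{loc}}$, and the upgrade is to be obtained from the charge equation \eqref{eq: charge eq} by exploiting the improved regularity of $\fv(t)$ available when $\psi_s$ is in the operator domain, combined with the estimate \eqref{eq: norm i estimate} of Lemma \ref{lem: i pro} and the differentiability of $\alv(t)$.

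For the boundary condition I would reverse the heuristic computation of Section \ref{sec: derivation}. The charge equation \eqref{eq: charge eq} was produced there by applying the Volterra operator $I$ to the Fourier-space transcription of the boundary relation \eqref{eq: bc heuristics}. Since Lemma \ref{lem: i identity} exhibits $J$ as a left inverse of $I$ on $L^1(0,t)$, applying $J$ to both sides of \eqref{eq: charge eq} reinstates the pre-charge identity \eqref{eq: charge eq heuristics}, which after Fourier anti-transformation in $\pv$ is exactly the desired condition $\lim_{\xv\to\yv_j}\phi_{\lambda,t}(\xv)=(\Gammal\qv(t))_j$. The algebraic cleanness of this inversion rests on the fact that the logarithmic kernel of $\J$ matches exactly the infrared $-\log|\xv-\yv_j|$ behaviour of $K_0$, so the cancellations of $\log\lambda$ divergences observed in the heuristic derivation persist when the computation is run backwards.

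I expect the main obstacle to be the regularity upgrade for $\qv$. The Volterra kernel $\I$ is not smoothing in the standard Sobolev sense, and Lemma \ref{lem: i pro} only preserves the Sobolev degree (with a small-time small constant that enables the Neumann iteration), so gaining a full Sobolev unit past the critical threshold of Proposition \ref{pro: uniqueness} is delicate and must crucially leverage the fact that the hypothesis $\psi_s\in\dom(H_{\alv(s)})$ produces a strictly smoother source $\fv(t)$ than the form-domain bounds \eqref{eq: fjt} would yield.
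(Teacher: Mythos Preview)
Your treatment of the boundary condition is essentially the paper's own argument: the paper's proof simply invokes Proposition \ref{pro: invariance} to place $\psi_t$ in $\dom[\F]$, declares that this suffices to render the heuristic chain \eqref{eq: bc heuristics}--\eqref{eq: charge eq heuristics} rigorous, and then appeals to Lemma \ref{lem: i identity} together with the fact that $\qv$ solves \eqref{eq: charge eq}. One small correction: Lemma \ref{lem: i identity} gives $(IJf)(t)=\int_0^t f$, so $J$ is not a left inverse of $I$; what the identity actually yields is injectivity of $I$ (if $If=0$ then $JIf=\int_0^t f=0$ for all $t$, hence $f=0$), which is all one needs to pass back from \eqref{eq: charge eq} to \eqref{eq: charge eq heuristics}.

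Where you go beyond the paper is in isolating the $H^2$ requirement for $\phi_{\lambda,t}$; the paper's three-line proof does not address this explicitly. However, your proposed mechanism for the regularity upgrade contains a gap. You claim that the operator-domain hypothesis on $\psi_s$ yields ``improved regularity of $\fv(t)$'' beyond what Proposition \ref{pro: uniqueness} establishes, but that proposition \emph{already} assumes $\psi_s\in\dom(H_{\alv(s)})$, and its ceiling $\nu<3/4$ is dictated by the off-diagonal $K_0$ contribution (the third term in \eqref{eq: fjt}), whose Sobolev degree is controlled by the large-$p$ decay of $J_0(p|\yv_j-\yv_k|)$ and is completely insensitive to whether the regular part $\phi_{\lambda,s}$ lies in $H^1$ or $H^2$. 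Since Lemma \ref{lem: i pro} only \emph{preserves} the Sobolev index, the Neumann iteration cannot manufacture the missing quarter-derivative, and for $N\geq 2$ your route to $\qv\in H^1$ does not close. (For $N=1$ the off-diagonal term is absent and your argument does go through.) The paper offers no alternative mechanism here either; its proof simply does not separate out the $H^2$ verification.
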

	
	\begin{proof}
		Thanks to Proposition \ref{pro: invariance} at least $ U(t,s) \psi_s \in \dom[\F] $ so that it can be decomposed in a regular part in $ H^1(\R^2) $ plus the singular terms given in \eqref{eq: form domain}. However this is the only information needed to make rigorous the heuristic derivation presented from eqs. \eqref{eq: bc heuristics} to \eqref{eq: charge eq heuristics}. The property stated in Lemma \ref{lem: i identity} and the fact that $ \qv(t) $ solves the charge equation implies then the result.
	\end{proof}
	
	\begin{lem}
		\label{lem: isometry}
		\mbox{}	\\
		Let $ \psi_s \in \dom(H_{\alv(s)}) $, then 
		\beq
			\label{eq: isometry}
			\lf\| U(t,s) \psi_s \ri\|_2 = \lf\| \psi_s \ri\|_2.
		\eeq
	\end{lem}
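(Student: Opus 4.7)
The approach is to show that $t\mapsto\|\psi_t\|_2^2$, with $\psi_t:=U(t,s)\psi_s$, is constant by differentiating it in $t$ and proving the derivative vanishes identically. The underlying principle is the ordinary probability conservation for a self-adjoint generator; concretely, the vanishing will follow by combining the weak Schr\"odinger equation (Corollary \ref{cor: weak se}) with the fact that the diagonal form $\form[\psi,\psi]$ is real on the whole form domain.

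First I would observe that, by Proposition \ref{pro: invariance}, $\psi_t \in \dom[\F]$ for every $t$, so $\psi_t$ is itself an admissible test function in \eqref{eq: weak se}. Assuming strong $L^2$-differentiability of $t \mapsto \psi_t$ (the point to which I return below), one can commute $\partial_t$ with the inner product and substitute $\phi = \psi_t$ in Corollary \ref{cor: weak se} to get
\bdm
  i\braket{\psi_t}{\partial_t\psi_t} = \form[\psi_t,\psi_t].
\edm
Direct inspection of \eqref{eq: form} shows that $\form[\psi,\psi] \in \R$ for every $\psi \in \dom[\F]$: each summand is manifestly real, including the off-diagonal Hermitian form $\sum_{j\neq k} q_j^{*} q_k K_0(\sqrt{\la}|\yv_j - \yv_k|)$, whose kernel is real and symmetric. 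The Leibniz rule then yields
\bdm
  \partial_t\|\psi_t\|_2^2 = 2\,\Re\braket{\psi_t}{\partial_t\psi_t} = -2\,\Im\form[\psi_t,\psi_t] = 0,
\edm
and continuity in $t$ together with the initial condition $\psi_s=\psi_s$ gives \eqref{eq: isometry}.

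The main obstacle is justifying the strong $L^2$-differentiability of $t\mapsto\psi_t$, which underlies both the Leibniz step and the identification $\partial_t\braket{\phi}{\psi_t} = \braket{\phi}{\partial_t\psi_t}$ for the specific choice $\phi=\psi_t$. This regularity must be extracted from the explicit ansatz \eqref{eq: ansatz}: the free-propagator contribution $U_0(t-s)\psi_s$ is strongly differentiable because $\psi_s \in \dom(H_{\alv(s)})$ has its regular part in $H^2(\R^2)$ and the $K_0$ singular terms are acted upon smoothly by $U_0$ away from $t=s$, while the Duhamel-type sum inherits its time regularity from $\qv(t) \in H^\nu(0,T)$, $\nu<3/4$, established in Proposition \ref{pro: uniqueness}. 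If pointwise classical differentiability turns out delicate at endpoints, one can either read the above identities in the sense of distributions (which still forces $\|\psi_t\|_2^2$ to be constant in $t$) or approximate $\psi_s$ by smoother data for which differentiability is manifest, prove isometry there, and pass to the limit using $L^2$-continuity of $\psi_t$, itself an immediate consequence of the continuity of $\qv$ from Proposition \ref{pro: uniqueness}.
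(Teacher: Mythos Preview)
Your argument is correct and follows essentially the same route as the paper: differentiate $\|\psi_t\|_2^2$, invoke Corollary~\ref{cor: weak se} with $\phi=\psi_t$, and use that $\form[\psi_t,\psi_t]\in\R$. The paper is in fact terser than you are---it does not discuss strong $L^2$-differentiability at all and cites Lemma~\ref{lem: op domain invariant} rather than Proposition~\ref{pro: invariance}---so your extra care is not misplaced (and the sign in your display $-2\,\Im\form$ should be $+2\,\Im\form$, though this is immaterial since the imaginary part vanishes).
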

	
	\begin{proof}
		We simply compute the time-derivative of the $ L^2 $ norm of the ansatz \eqref{eq: ansatz}:
		\beq
			\partial_t \lf\| U(t,s) \psi_s \ri\|^2_2 = 2 \Re \braket{U(t,s) \psi_s}{\partial_t U(t,s) \psi_s} = - 2 \Re \lf(  i \form[U(t,s) \psi_s,U(t,s) \psi_s] \ri) = 0,
		\eeq
		thanks to Lemma \eqref{lem: op domain invariant}, Corollary \ref{cor: weak se} and the trivial observation that if $ \psi_t \in \dom(\ham)$ then $ \psi_t \in \dom[\F] $.
	\end{proof}
		 
	\begin{proof}[Proof of Theorem \ref{teo: evolution}]
		Lemma \ref{lem: op domain invariant} in combination with Corollary \ref{cor: weak se} implies that given any $ \psi_s \in \dom(H_{\alv(s)}) $, $ U(t,s) \psi_s $ solves the time-dependent Schr\"{o}dinger equation. 
		
		Moreover $ U(t,t) = \one $ and, for any $ \psi_0 \in \dom(H_{\alpha(0)}) $, 
		\bml{
			\lf( U(t,s) U(s,0) \psi_0 \ri)(\xv) = \lf(U_0(t) \psi_0 \ri)(\xv) + \disp\frac{i}{2\pi} U_0(t-s) \sum_{j=1}^N \int_0^s \diff \tau \: U_0\lf(s - \tau; |\xv - \yv_j|\ri) \: q_j(\tau) \\
			+ \disp\frac{i}{2\pi} \sum_{j=1}^N \int_s^t \diff \tau \: U_0\lf(t - \tau; |\xv - \yv_j|\ri) \: q_j(\tau) \\
			= \lf(U_0(t) \psi_0 \ri)(\xv)
			+ \disp\frac{i}{2\pi} \sum_{j=1}^N \int_0^t \diff \tau \: U_0\lf(t - \tau; |\xv - \yv_j|\ri) \: q_j(\tau) = \lf( U(t,0) \psi_0 \ri)(\xv),
		}
		i.e., the map $ U(t,s) $ satisfies the group composition rules. Since $ \dom(\ham) $ is densely defined, one can extend the map $ U(t,s) $ to the whole Hilbert space by density and, due to the properties above, such an extension is automatically unitary.
	\end{proof}

\medskip
	\noindent
	{\bf Acknowledgements.} R.C. and M.C. acknowledge the support of MIUR through the FIR grant 2013 ``Condensed Matter in Mathematical Physics (Cond-Math)'' (code RBFR13WAET). The authors also thank \textsc{A. Teta} for helpful discussions about the presentation of the model.

\end{document}